%
%
%
%
%
\RequirePackage{fix-cm}
\documentclass[twocolumn]{svjour3}          
\smartqed  
\usepackage{graphicx}
\usepackage{cite}
\usepackage{amssymb,amsmath}
\usepackage{tensor}
\usepackage{subfigure}
\usepackage[ruled,vlined]{algorithm2e}
\usepackage{color}
%
%
%
%

\newcommand{\lbn}{\par\medskip\noindent}

\newcommand{\tr}{\mathrm{tr}}

\begin{document}

\title{Performance Analysis of ZF Receivers with Imperfect CSI for Uplink Massive MIMO Systems\thanks{This work was supported by Basic Research Laboratories (BRL) through NRF grant funded by the MSIP (No.~2015056354).}
}


\author{Van-Dinh Nguyen \and
        Oh-Soon Shin}


\institute{Van-Dinh Nguyen \at
              School of Electronic Engineering \\
							Soongsil University, Seoul 06978, Korea \\
              \email{nguyenvandinh@ssu.ac.kr}           
           \and
           Oh-Soon Shin \at
              School of Electronic Engineering \\
							Soongsil University, Seoul 06978, Korea \\
							\email{osshin@ssu.ac.kr}
}


\maketitle

\begin{abstract}
We consider the uplink of massive multiple-input multiple-output systems in a multicell environment. Since the base station (BS) estimates the channel state information (CSI) using the pilot signals transmitted from the users, each BS will have imperfect CSI in practice. Assuming zero-forcing method to eliminate the multi-user interference, we derive the exact analytical expressions for the probability density function (PDF) of the signal-to-interference-plus-noise ratio (SINR), the corresponding achievable rate, the outage probability, and the symbol error rate (SER) when the BS has imperfect CSI. An upper bound of the SER is also derived for an arbitrary number of antennas at the BS. Moreover, we derive the upper bound of the achievable rate for the case where the number of antennas at the BS goes to infinity, and the analysis is verified by presenting numerical results.
\keywords{Massive multiple-input multiple-output (MIMO) \and channel estimation \and channel state information (CSI) \and zero-forcing (ZF)}
\end{abstract}

\section{Introduction}
\label{section: Introduction}

Over the past decade, multiple-input multiple-output (MIMO) techniques have been investigated, starting from single-user scenarios \cite{GJJV:03:JSAC}. In point-to-point (single user) MIMO with $n_T$ transmit and $n_R$ receive antennas, the channel capacity is known to grow linearly as $\min(n_T,n_R)$ increases through the combined use of space-time coding and spatial multiplexing. A multi-user MIMO (MU-MIMO) scenario requires spatial sharing of the channel among the users connected to the network \cite{GKHCS:07:MAG,AWH:07:WCOM}. MU-MIMO has been discussed extensively for use in the 3GPP LTE-Advanced standardization. Relative to single-user MIMO (SU-MIMO), MU-MIMO has several key advantages in that it allows for direct gain in a multiple access capacity and spatial multiplexing gain at the base station (BS) without the need for multiple antenna terminals.

Recently, industry and academia have expressed significant interest in implementing massive MIMO in both single cell and multicell environments \cite{ HLM:13:COM, YM:13:JSAC, HBD:13:JSAC, MARZ:10:WCOM,Hanif:14:TCOM }. The use of additional antennas at the BS has been shown to improve power efficiency both for the uplink \cite{HLM:13:COM} and for the downlink \cite{YM:13:JSAC,Hanif:14:TCOM}. Massive MIMO is a system where a BS equipped with a hundred or more antennas simultaneously serves several users in the
same frequency band by exploiting the degrees-of-freedom (DoF) in the spatial domain \cite{Hanif:14:TCOM,HCPR:12:WCOM, OLT:13:JSAC,PML:12:COML,JAT:11:WCOM,MARZ:10:WCOM,YM:13:JSAC,HBD:13:JSAC,HLM:13:COM}. In such a system, many techniques were proposed to cancel and/or mitigate interference in MIMO systems, such as maximum likelihood multiuser detection in \cite{Dai} and a combination of a zero-forcing (ZF) and dirty paper coding (DPC) in \cite{Tran}. However, the complexity issue in practical systems for these techniques is a growing concern  proportional to the number of antennas at the BS \cite{Tran}. Fortunately, when the number of antennas at the BS is large enough, from the law of large numbers, the random and mutually independent channel vectors between the BS and the users become pairwise orthogonal \cite{HC:70:Book}. Indeed, the BS can purposefully avoid transmitting toward certain directions, and simple matched filer processing can be used to completely eliminate independent background noise and intracell interference \cite{MARZ:10:WCOM,HLM:13:COM,NMDL:13:VT,YGFL:13:JSAC}. Spatial-division multiplexing for massive MIMO can enhance the reliability and data rate of the system because more distinct paths are established between the BS and the users \cite{YGFL:13:JSAC,NMDL:13:VT}. Notably, the additional DoF provided by a massive number of antennas at the BS can reduce the transmit power for the users on the uplink. This is very efficient when multimedia services are increasing and the design of a battery with long time use is a major challenge for manufacturers \cite{Fehske}. Of course, the electrical power supply to the BS will be higher which is consumed by rectifier, baseband digital signal processing circuit, power amplifier, feeder, and cooling system on the downlink. Hence, solutions to reduce the emission of RF power would help in cutting down the power consumption of the BS \cite{Gonzalez}. Therefore, massive MIMO is a promising technology that can be integrated into next generation wireless systems \cite{RPLLM:13:MAG, LTEM:14:MAG}.

It is known that a multi-user detection technique called \, successive \, interference \, cancellation (SIC) can \,                              achieve maximum rate in the uplink \cite{Tse}. However,     the SIC is difficult to implement in practice due to its high computational complexity. So other detection methods that are based on linear detectors, including maximal ratio combining (MRC), ZF, and minimum mean square error (MMSE), have been developed \cite{YM:13:JSAC,NMDL:13:VT,NGO:13:COM,MARZ:10:WCOM,HBD:13:JSAC}. Among them, \cite{MARZ:10:WCOM} derived the asymptotic analysis for the signal-to-interference-plus-noise ratio (SINR) for the uplink by using MRC and the SINR for the downlink by using maximum-ratio transmission. An exact performance analysis for the uplink was provided in \cite{NMDL:13:VT} with arbitrary antennas at the BS. All of these results have shown that a linear receiver can exploit the advantages of massive antenna arrays at the BS with low implementation complexity. A ZF receiver can cancel intracell interference, and therefore it generally outperforms an MRC receiver. This implies that a ZF receiver can reduce the number of BS antennas necessary, relative to the number needed for MRC, while obtaining the same system performance. In general, the performance of the ZF receiver is worse than that of the MMSE receiver. However, if the SINR is high enough, the performance of the ZF receiver and that of the MMSE receiver are then equivalent \cite{HLM:13:COM, Tse}. Furthermore, an MMSE requires additional knowledge on the SINR and higher complexity than the ZF receiver. In addition, exact performance analysis is not tractable even in the case of perfect CSI \cite{HBD:13:JSAC}. It was shown in \cite{Hanif:14:TCOM} that the ZF beamforming can provide a good tradeoff between complexity and system performance, especially when the number of BS antennas is very large. Therefore, we focus on the ZF receiver in this paper.

In \cite{NMDL:13:VT}, the authors considered a similar MIMO configuration with a ZF receiver, where the CSI is assumed to be perfectly known to both the transmitter and the receiver. Under such assumptions for the CSI, the expression of the exact performance of the system might be tractable. In practice, however, CSI is not perfect at the transmitter and the receiver. In order for the BS to acquire the CSI, a simple scheme can be employed where users send pilot signals to the BS, so that the BS can estimate the channel by analyzing the received pilot signals in an uplink training phase \cite{HBD:13:JSAC,NGO:13:COM, JAT:11:WCOM,YGFL:13:JSAC,GJ:11:SPAWC}. The least-squares (LS) method is a conventional method that is generally used to estimate the channel state. Unfortunately, this method causes significant degradation in the system performance due strong intercell interference. In contrast, the MMSE estimation method can result in ​​more accurate channel estimation \cite{HBD:13:JSAC}. In the uplink transmission phase, the signals transmitted from the users to the BS can be detected by using a linear detector using the estimated CSI.

In this paper, we derive new results for the achievable rate of the uplink in a MIMO system with a ZF receiver. Our analysis considers a scenario that is more realistic than that presented in \cite{NMDL:13:VT}, since we consider that the channel state information at the BS is not perfect. Our main results are listed as follows.

\begin{itemize} 
\item In Section~\ref{SINR}, the probability density function (PDF) and cumulative distribution function (CDF) of the SINR are derived. These results are then used to analyze the bit error rate (BER) and the outage probability. We demonstrate how the performance depends only on the gain of the interfering links and how it saturates as the number of antennas at the BS becomes very large. An asymptotic analysis is also presented.  
\item In Section~\ref{Performance},  we derive the exact closed-form expressions of the achievable uplink rate, of the outage probability, and of the symbol error rate (SER) for an arbitrary number of antennas at the BS. In addition, an upper bound is also derived for the SER to evaluate the performance of the system in the absence of high precision requirements.
\end{itemize}

The rest of this paper is organized as follows. Section~\ref{System Model} describes the system model for massive MIMO in a multicell environment, and it explains how to estimate the channel state by using uplink pilot signals and how to use the estimated channel in a ZF receiver. In Section~\ref{SINR}, the statistics for the SINR are derived, and an asymptotic analysis is presented for the case where the number of antennas at the BS grows without limit. Section~\ref{Performance} analyzes the achievable uplink rate, the outage probability, and the SER for an arbitrary number of antennas at the BS. Section~\ref{Simulation} presents the numerical results that are used to evaluate the performance of the system. Finally, the conclusion is presented in Section~\ref{Conclusion}.

\textsl{Notation}: The superscript $\dagger$, *, $T$, and $\tr(\cdot)$ stand for the complex conjugate-transpose, the conjugate, the transpose, and the trace, respectively. $[\mathbf{G}]_{ij}$ denotes the entry of matrix $\mathbf{G}$ on the $i$-th row and $j$-th column, and $\mathbf{I}_{n}$ is the $n\times n$ identity matrix. We use $a \sim b$ to indicate that the random variables, $a$ and $b$, follow the same distribution. $\mathbb{E}\left\{{\cdot}\right\}$ denotes the expectation operation, and  $\mathbf{x} \sim \mathcal{CN} \left(\mathbf{0},\mathbf{\Sigma}\right)$ indicates that $\mathbf{x}$ is a symmetric complex Gaussian random vector with zero-mean and covariance matrix $\mathbf{\Sigma}$.

\section{System Model}
\label{System Model}

\subsection{Multicell Multi-user MIMO System}

In this paper, we consider the uplink transmission of a multicell multi-user MIMO system with $L > 1$ cells. Each cell is comprised of one BS and $K$ users. The BS's are assumed to be equipped with $M$ antennas and to use the same time-frequency resources to serve their own $K$ users. We assume $M \geq K$. Each user is assumed to be equipped with a single antenna, and the channel matrices between the BS and the users have to be estimated at the BS by using the uplink pilot signals. The $M \times 1$ received signal vector at the $\ell$-th BS is given as
\begin{equation}
\mathbf{y}_{\ell} = \sqrt{P_{\mathrm{u}}}\sum_{i=1}^{L}\mathbf{G}_{i\ell}\mathbf{x}_{i} + \mathbf{n}_{\ell},
\label{1}
\end{equation}
where $\mathbf{G}_{i\ell}$ represents the $M \times K$ channel matrix between the $\ell$-th BS and the $K$ users in the $i$-th cell,  $P_{\mathrm{u}}$ is the average transmit power of each user, and $\mathbf{n}_{\ell}\sim\mathcal{CN}\left(\mathbf{0}, \mathbf{I}_M\right)$ is a vector of the additive white Gaussian noise (AWGN). $\mathbf{x}_{i}$ is a $K\times1$ vector of message-bearing quadrature amplitude modulation (QAM) symbols transmitted from the $K$ users to the BS in the $i$-th cell, and the average power of each symbol is normalized to that of unity:
\begin{equation}
\mathbb{E}\left\{\mathbf{x}_{i}\mathbf{x}_{i}^{\dagger}\right\} = \mathbf{I}_{K}.
\label{xi}
\end{equation}

The channel matrix $\mathbf{G}_{i\ell}$ can be decomposed as
\begin{equation}
\mathbf{G}_{i\ell} = \mathbf{H}_{i\ell}\mathbf{D}_{i\ell}^{1/2},
\label{2}
\end{equation}
where $\mathbf{H}_{i\ell}$ is an $M \times K$ matrix of which the $(m,k)$ element $\left[\mathbf{H}_{i\ell}\right]_{mk}$ is the fast-fading coefficient from the $k$-th user in the $i$-th cell to the $m$-th antenna at the BS in the $\ell$-th cell. We assume that the entries for $\mathbf{H}_{i\ell}$ are independent and identically distributed (i.i.d.) complex Gaussian random variables of zero mean and unit variance. $\mathbf{D}_{i\ell}$ is a $K \times K$ diagonal matrix for which $\beta_{i\ell,k} \triangleq \left[\mathbf{D}_{i\ell}\right]_{kk}$ represents the large-scale fading incorporating the path loss and the shadowing. Accordingly, the $(m,k)$ element of $\mathbf{G}_{i\ell}$ can be expressed as
\begin{equation}
\begin{array}{lcl}
\left[\mathbf{G}_{i\ell}\right]_{mk} &=& \sqrt{\beta_{i\ell,k}} \left[\mathbf{H}_{i\ell}\right]_{mk}, 
\\ && m = 1, 2, \cdots, M, \quad k = 1, 2, \cdots, K.
\end{array}
\label{3}
\end{equation}

\subsection{Uplink Channel Estimation}

In order to detect the data transmitted by $K$ users in the cell, the BS needs to estimate the CSI for the users. Let $\tau_{\mathrm{u}}$ denote the number of symbols per coherence interval used for the uplink training, and let $T$ be the length of the coherence interval. All users in every cell simultaneously transmit $\tau_{\mathrm{u}}$ orthogonal uplink pilot symbols. In our work, we assume $T>\tau_{\mathrm{u}}\geq K$. Let $\sqrt{\tau_{\mathrm{u}}P_{\mathrm{u}}}\mathbf{\Psi}_{\ell}$ be the pilot sequences transmitted by $K$ users in the $\ell$-th cell. Note that $\mathbf{\Psi}_{\ell}$ is a $\tau_\mathrm{u} \times K$ matrix, and it satisfies $\mathbf{\Psi}_{\ell}^{\dagger}\mathbf{\Psi}_{\ell} = \mathbf{I}_{K}$, so that the intracell interference between the pilot signals can be ignored. We also assume that all the users in the $L$ cells use the same set of pilot sequences.  Similarly to \eqref{1}, the received pilot matrix at the $\ell$-th BS is given as

\begin{equation}
\mathbf{Y}_{p,\ell} = \sqrt{{\tau_{\mathrm{u}}P_{\mathrm{u}}}}\sum_{i=1}^{L}\mathbf{G}_{i\ell}\mathbf{\Psi}_{i}^{T} + \textbf{N}_{\ell},
\label{4}
\end{equation}
where $\mathbf{N}_{\ell}$ is an $M \times \tau_{\mathrm{u}}$ AWGN matrix. Note that $\mathbf{G}_{\ell\ell}$ is the desired channel for the own cell, while $\mathbf{G}_{i\ell}$'s, $i \neq \ell$, are the interference channels from the other cells. The received signal $\mathbf{Y}_{p,\ell}$ projected onto $\mathbf{\Psi}_{\ell}^{*}$ is given as
\begin{equation}
\tilde{\mathbf{Y}}_{p,\ell} = \mathbf{Y}_{p,\ell}\mathbf{\Psi}_{\ell}^{*} = \sqrt{{\tau_{\mathrm{u}}P_{\mathrm{u}}}} \sum_{i=1}^{L}\mathbf{G}_{i\ell} + \mathbf{W}_{\ell},
\label{5}
\end{equation}
where $\mathbf{W}_{\ell} \triangleq \mathbf{N}_{\ell}\mathbf{\Psi}_{\ell}^{*}$. Let $\tilde{\mathbf{y}}_{p,\ell k}$ denote the $k$-th column of $\tilde{\mathbf{Y}}_{p,\ell}$ as
\begin{equation}
\tilde{\mathbf{y}}_{p,\ell k} =  \sqrt{{\tau_{\mathrm{u}}P_{\mathrm{u}}}} \sum_{i=1}^{L}\mathbf{h}_{i\ell, k}\sqrt{\beta_{i\ell, k}} + \mathbf{w}_{\ell, k},
\label{eq:column of Y}
\end{equation}
where $\mathbf{h}_{i\ell, k}$ and $\mathbf{w}_{\ell, k}$ are the $k$-th columns of the $\mathbf{H}_{i\ell}$ and $\mathbf{W}_{\ell}$, respectively.
The MMSE, or equivalently Bayesian estimate, for the channel matrix $\mathbf{h}_{\ell\ell, k}$ is given as~\cite[Eq.~(12.7)]{KAY:93:Book}
\begin{equation} 
\hat{\mathbf{h}}_{\ell\ell, k} = \sqrt{{\tau_{\mathrm{u}}P_{\mathrm{u}}}\beta_{\ell\ell,k}}\Bigl(\tau_{\mathrm{u}}P_{\mathrm{u}}\sum_{i=1}^{L}\beta_{i\ell,k}+1\Bigr)^{-1}\tilde{\mathbf{y}}_{p,\ell k}.
\label{eq:Estimation}
\end{equation}
Thus, the estimate $\hat{\mathbf{H}}_{\ell\ell}$ of $\mathbf{H}_{\ell\ell}$ is given as
\begin{equation} 
\hat{\mathbf{H}}_{\ell\ell} = \frac{1}{\sqrt{\tau_{\mathrm{u}}P_{\mathrm{u}}}}\tilde{\mathbf{Y}}_{p,\ell}\mathbf{D}_{\ell}^{-1}\mathbf{D}_{\ell\ell}^{1/2},
\label{eq:Estimation_H}
\end{equation}
where $\mathbf{D}_{\ell}\triangleq\sum_{i=1}^{L}\mathbf{D}_{i\ell}+\frac{1}{\tau_\mathrm{u}P_{\mathrm{u}}}\mathbf{I}_K$. By multiplying both sides of \eqref{eq:Estimation_H} with $\mathbf{D}_{\ell\ell}^{1/2}$, we have the estimate of the channels between the BS and the users in the $\ell$-th cell as
\begin{equation} \hat{\mathbf{G}}_{\ell\ell} = \Bigl(\sum_{i=1}^{L}\mathbf{G}_{i\ell} + \frac{1}{\sqrt{\tau_{\mathrm{u}}P_\mathrm{u}}}\mathbf{W}_{\ell}\Bigr)\hat{\mathbf{D}}_{\ell\ell},
\label{6}
\end{equation}
or equivalently
\begin{equation} \hat{\mathbf{G}}_{\ell\ell} = \mathbf{G}_{\ell\ell}\hat{\mathbf{D}}_{\ell\ell}+\sum_{i=1,i\neq\ell}^{L}\mathbf{G}_{i\ell}\hat{\mathbf{D}}_{\ell\ell} + \frac{1}{\sqrt{\tau_{\mathrm{u}}P_\mathrm{u}}}\mathbf{W}_{\ell}\hat{\mathbf{D}}_{\ell\ell},
\label{6a}
\end{equation}
where $\hat{\mathbf{D}}_{\ell\ell}\triangleq\mathbf{D}_{\ell}^{-1}\mathbf{D}_{\ell\ell}$ is a diagonal matrix with the $k$-th diagonal element $\bigl[\hat{\mathbf{D}}_{\ell\ell}\bigr]_{kk} = \beta_{\ell\ell,k}\left(\sum_{i=1}^{L}\beta_{i\ell,k}+\frac{1}{\tau_{\mathrm{u}}P_\mathrm{u}}\right)^{-1}$. The second term in \eqref{6a} is the channel estimation error caused by the users from the other cells, which is called pilot contamination \cite{NGO:13:COM,MARZ:10:WCOM}. However, the BSs cannot distinguish pilot signals of intra-cell users from those of inter-cell users, since each BS is assigned the same pilot sequences. This causes the achievable performance to be saturated.

In a similar way to \eqref{6}, we can estimate $\mathbf{G}_{ii}$ for $i\neq\ell$. Let $\xi_{ii}, i = 1,2,\cdots,L$, denote the channel estimation error for $\mathbf{G}_{ii}$. Then, $\mathbf{G}_{ii}$ can be expressed as
\begin{equation}
 \mathbf{G}_{ii}=\hat{\mathbf{G}}_{ii}+\mathbf{\xi}_{ii}.
\label{7}
\end{equation}
We assume that the error $\mathbf{\xi}_{ii}$ is independent of $\hat{\mathbf{G}}_{ii}$ \cite{ KAY:93:Book}.

\subsection{Linear ZF Receiver}

After the BS estimates the CSI using the pilot signals, the data transmission of the $K$ users in the cell begins. The BS uses the estimated CSI to detect the independent transmit data streams from the users. In this subsection, we describe how a ZF receiver works to detect the data streams. Let $\mathbf{A}_{\ell}$ be the ZF receiver matrix, i.e.,
\begin{equation}
\mathbf{A}_{\ell} = \hat{\mathbf{G}}_{\ell\ell}\left(\hat{\mathbf{G}}_{\ell\ell}^{\dagger}\hat{\mathbf{G}}_{\ell\ell}\right)^{-1}.
\label{8}
\end{equation}
From \eqref{1} and \eqref{8}, the received uplink signal at the $\ell$-th BS is separated into $K$ data streams by multiplying $\mathbf{y}_{\ell}$ with $\mathbf{A}_{\ell}^{\dagger}$ as
\begin{equation}
\begin{array}{lcl}
	  \mathbf{r}_{\ell} &=& \mathbf{A}_{\ell}^{\dagger}\mathbf{y}_{\ell}= \sqrt{P_{\mathrm{u}}} \mathbf{x}_{\ell} + \sqrt{P_{\mathrm{u}}}\mathbf{A}_{\ell}^{\dagger} \mathbf{\xi}_{\ell\ell} \mathbf{x}_{\ell} \\
	   &+& \sqrt{P_{\mathrm{u}}}\sum_{i=1,i\neq \ell}^{L} \mathbf{A}_{\ell}^{\dagger} \mathbf{G}_{i\ell} \mathbf{x}_{i} + \mathbf{A}_{\ell}^{\dagger} \mathbf{n}_{\ell}.
\end{array}
\label{9}
\end{equation} 
Correspondingly, the signal received for the $k$-th user at the $\ell$-th BS is derived as
\begin{equation}
\begin{array}{lcl}
	  \mathbf{r}_{\ell,k} &=& \sqrt{P_{\mathrm{u}}} \mathbf{x}_{\ell,k} +  \sqrt{P_{\mathrm{u}}}\mathbf{a}_{\ell,k}^{\dagger} \mathbf{\xi}_{\ell\ell} \mathbf{x}_{\ell} \\
	   &+& \sqrt{P_{\mathrm{u}}}\sum_{i=1,i\neq \ell}^{L} \mathbf{a}_{\ell,k}^{\dagger} \mathbf{G}_{i\ell} \mathbf{x}_{i} 
	   + \mathbf{a}_{\ell,k}^{\dagger} \mathbf{n}_{\ell},
\end{array}
\label{10}
\end{equation}
where $\mathbf{a}_{\ell,k}$ denotes the $k$-th column of the matrix $\mathbf{A}_{\ell}$, and $\mathbf{x}_{\ell,k}$ denotes the $k$-th element of the vector $\mathbf{x}_{\ell}$. The SINR $\mathbf{\gamma}_{\ell,k}$ of the $k$-th user in the $\ell$-th cell after the ZF processing can be computed as
\begin{equation}
\begin{aligned}
\mathbf{\gamma}_{\ell,k} =&\Bigl(\mathbb{E}\Bigl\{\left|\mathbf{a}_{\ell,k}^{\dagger}\mathbf{\xi}_{\ell\ell}\mathbf{x}_{\ell}\right|^{2}\Bigr\}+\mathbb{E}\Bigl\{\sum_{i=1,i\neq \ell}^{L}\left|\mathbf{a}_{\ell,k}^{\dagger}\mathbf{G}_{i\ell}\mathbf{x}_{i}\right|^2 \Bigr\}\\
& +\mathbb{E}\Bigl\{\frac{\left|\mathbf{a}_{\ell,k}^{\dagger}\mathbf{n}_{\ell}\right|^{2}}{P_{\mathrm{u}}}\Bigr\}\Bigr)^{-1}.
\end{aligned}
\label{12}
\end{equation}

\section{PDF of Uplink SINR and Asymptotic Analysis}
\label{SINR}

\subsection{PDF of Uplink SINR}

In this subsection, we derive the PDF of the uplink SINR for the ZF detection. From \eqref{12}, the SINR of the $k$-th user in $\ell$-th cell can be expressed as
\begin{equation}
\begin{aligned}
&\mathbf{\gamma}_{\ell,k} = \\
&\frac{1}{\left(\sum_{i=1}^{L}\alpha_{i\ell}+\frac{1}{P_{\mathrm{u}}}\right)\Bigl[\left(\hat{\mathbf{G}}_{\ell\ell}^{\dagger}\hat{\mathbf{G}}_{\ell\ell}\right)^{-1}\Bigr]_{kk}+\frac{\sum_{i=1,i\neq \ell}^{L}\beta_{i\ell,k}^{2}}{\beta_{\ell\ell,k}^{2}}},
\end{aligned}
\label{15}
\end{equation}
where $\alpha_{i\ell} \triangleq \mathrm{tr}\left(\mathbb{E}\left[\xi_{i\ell}\xi_{i\ell}^{\dagger}\right]\right) = \sum_{k=1}^{K}\frac{\tau_{\mathrm{u}}P_{\mathrm{u}}\beta_{i\ell,k}\sum_{j=1,j\neq i}^{L}\beta_{j\ell,k}}{\tau_{\mathrm{u}}P_{\mathrm{u}}\sum_{j=1}^{L}\beta_{j\ell,k}+1}$ can be obtained by using~\cite[Eq.~(12.8)]{KAY:93:Book}. The derivation of \eqref{15} can be proved from
\begin{equation}
\begin{aligned}
& \mathbb{E}\Bigl\{\left|\mathbf{a}_{\ell,k}^{\dagger}\mathbf{\xi}_{\ell\ell}\mathbf{x}_{\ell}\right|^{2}\Bigr\}=\mathbb{E}\left\{\mathbf{a}_{\ell,k}^{\dagger}\mathbf{\xi}_{\ell\ell}\mathbf{x}_{\ell}\mathbf{x}_{\ell}^{\dagger}\mathbf{\xi}_{\ell\ell}^{\dagger}\mathbf{a}_{\ell,k}\right\}\\
&=\tr\left(\mathbb{E}\left[\mathbf{\xi}_{\ell\ell}\mathbf{\xi}_{\ell\ell}^{\dagger}\right]\right)\left|\mathbf{a}_{\ell,k}^{\dagger}\right|^{2}=\alpha_{\ell\ell}\Bigl[\left(\hat{\mathbf{G}}_{\ell\ell}^{\dagger}\hat{\mathbf{G}}_{\ell\ell}\right)^{-1}\Bigr]_{kk},
\end{aligned}
\label{16}
\end{equation}

\begin{equation}
\begin{aligned}
 &\mathbb{E}\Bigl\{\sum_{i=1,i\neq \ell}^{L}\left|\mathbf{a}_{\ell,k}^{\dagger}\mathbf{G}_{i\ell}\mathbf{x}_{i}\right|^2 \Bigr\}\\
&=\mathbb{E}\Bigl\{\sum_{i=1,i\neq \ell}^{L}\left|\mathbf{a}_{\ell,k}^{\dagger}\left\{\hat{\mathbf{G}}_{i\ell}+\mathbf{\xi}_{i\ell}\right\}\mathbf{x}_{i}\right|^2 \Bigr\}\\
&=\mathbb{E}\Bigl\{\sum_{i=1,i\neq \ell}^{L}\mathbf{a}_{\ell,k}^{\dagger}\hat{\mathbf{G}}_{i\ell}\mathbf{x}_{i}\mathbf{x}_{i}^{\dagger}\hat{\mathbf{G}}_{i\ell}^{\dagger}\mathbf{a}_{\ell,k}\Bigr\}\\
&\qquad+ \mathbb{E}\Bigl\{\sum_{i=1,i\neq \ell}^{L}\mathbf{a}_{\ell,k}^{\dagger}\mathbf{\xi}_{i\ell}\mathbf{x}_{i}\mathbf{x}_{i}^{\dagger}\mathbf{\xi}_{i\ell}^{\dagger}\mathbf{a}_{\ell,k}\Bigr\}\\
&\stackrel{(a)}{=}\frac{\sum_{i=1,i\neq \ell}^{L}\beta_{i\ell,k}^2}{\beta_{\ell\ell,k}^2}\mathbb{E}\left\{\mathbf{a}_{\ell,k}^{\dagger}\hat{\mathbf{G}}_{\ell\ell}\hat{\mathbf{G}}_{\ell\ell}^{\dagger}\mathbf{a}_{\ell,k}\right\}\\ 
&\qquad+\sum_{i=1,i\neq \ell}^{L}\tr\left(\mathbb{E}\left[\mathbf{\xi}_{i\ell}\mathbf{\xi}_{i\ell}^{\dagger}\right]\right)\left|\mathbf{a}_{\ell,k}^{\dagger}\right|^{2} \\
&=\frac{\sum_{i=1,i\neq \ell}^{L}\beta_{i\ell,k}^2}{\beta_{\ell\ell,k}^2}\mathbb{E}\left\{\left[\mathbf{A}_{\ell}^{\dagger}\hat{\mathbf{G}}_{\ell\ell}\hat{\mathbf{G}}_{\ell\ell}^{\dagger}\mathbf{A}_{\ell}\right]_{kk}\right\}\\ 
&\qquad+\sum_{i=1,i\neq \ell}^{L}\alpha_{i\ell}\left[\left(\hat{\mathbf{G}}_{\ell\ell}^{\dagger}\hat{\mathbf{G}}_{\ell\ell}\right)^{-1}\right]_{kk}\\ 
&\stackrel{(b)}{=}\frac{\sum_{i=1,i\neq \ell}^{L}\beta_{i\ell,k}^{2}}{\beta_{\ell\ell,k}^{2}} + \sum_{i=1,i\neq \ell}^{L}\alpha_{i\ell}\left[\left(\hat{\mathbf{G}}_{\ell\ell}^{\dagger}\hat{\mathbf{G}}_{\ell\ell}\right)^{-1}\right]_{kk},
\end{aligned}
\label{17}
\end{equation}

\noindent and
\begin{equation}
\mathbb{E}\Bigl\{\frac{1}{P_{\mathrm{u}}}\left|\mathbf{a}_{\ell,k}^{\dagger}\mathbf{n}_{\ell}\right|^{2}\Bigr\}=\frac{1}{P_{\mathrm{u}}}\left|\mathbf{a}_{\ell,k}^{\dagger}\right|^{2}=\frac{1}{P_{\mathrm{u}}}\left[\left(\hat{\mathbf{G}}_{\ell\ell}^{\dagger}\hat{\mathbf{G}}_{\ell\ell}\right)^{-1}\right]_{kk}.
\label{18}
\end{equation}
Based on the results in \eqref{6} and \eqref{7}, let $\hat{\mathbf{g}}_{i\ell,k}$ and $\mathbf{\xi}_{i\ell,k}$ be the columns of $\hat{\mathbf{G}}_{i\ell}$ and $\mathbf{\xi}_{i\ell}$, respectively, where $\hat{\mathbf{g}}_{i\ell,k}\sim\mathcal{CN}\left(\mathbf{0},\frac{\beta_{i\ell,k}^{2}}{\hat{\beta}_{i k}}\right)$, with $\hat{\beta}_{i k} \triangleq \sum_{j=1}^{L}\beta_{j\ell,k}+\frac{1}{\tau_{\mathrm{u}}P_{\mathrm{u}}}$, and $\mathbf{\xi}_{i\ell,k}\sim\mathcal{CN}\left(\mathbf{0},\left(\beta_{i\ell,k}-\hat{\beta}_{i k}\right)\mathbf{I}_M\right)$. Therefore, $\alpha_{i\ell}, i=1,2,\cdots, L$ in \eqref{16} and \eqref{17} can easily be derived. From the fact that $\hat{\mathbf{g}}_{i\ell,k}=\sqrt{\beta_{i\ell,k}}\hat{\mathbf{h}}_{i\ell,k}$ and \eqref{eq:Estimation}, we have $\hat{\mathbf{g}}_{i\ell,k}=\frac{\beta_{i\ell,k}}{\beta_{\ell\ell,k}}\hat{\mathbf{g}}_{\ell\ell,k}$ \cite[eq. (51)]{NGO:13:COM},  which leads to the equality $(a)$ in \eqref{17}. Furthermore, we have
\begin{equation}
\begin{aligned}
\mathbf{A}_{\ell}^{\dagger}\hat{\mathbf{G}}_{\ell\ell}\hat{\mathbf{G}}_{\ell\ell}^{\dagger}\mathbf{A}_{\ell}&=\bigl(\hat{\mathbf{G}}_{\ell\ell}^{\dagger}\hat{\mathbf{G}}_{\ell\ell}\bigr)^{-1}\hat{\mathbf{G}}_{\ell\ell}^{\dag}\hat{\mathbf{G}}_{\ell\ell}\hat{\mathbf{G}}_{\ell\ell}^{\dagger} \hat{\mathbf{G}}_{\ell\ell}\bigl(\hat{\mathbf{G}}_{\ell\ell}^{\dagger}\hat{\mathbf{G}}_{\ell\ell}\bigr)^{-1}\\
                  &=\mathbf{I}_K.
\end{aligned}
\notag
\end{equation}
Then, the equality $(b)$ in \eqref{17} can be derived.
Substituting \eqref{16}, \eqref{17} and \eqref{18} into \eqref{12}, we obtain \eqref{15}.

Let $X \triangleq \frac{1}{\left[\left(\hat{\mathbf{G}}_{\ell\ell}^{\dagger}\hat{\mathbf{G}}_{\ell\ell}\right)^{-1}\right]_{kk}}$, then \eqref{15} can be rewritten as

\begin{equation}
\mathbf{\gamma}_{\ell,k} =\frac{X}{\left(\sum_{i=1}^{L}\alpha_{i\ell}+\frac{1}{P_{\mathrm{u}}}\right)+\frac{\sum_{i=1,i\neq \ell}^{L}\beta_{i\ell,k}^{2}}{\beta_{\ell\ell,k}^{2}}X}.
\label{21}
\end{equation}
Note that from \eqref{6},  $\hat{\mathbf{G}}_{\ell\ell}^{\dagger}\hat{\mathbf{G}}_{\ell\ell}$ is a central complex Wishart matrix with $M$ degrees of freedom and covariance matrix $\mathbf{\Sigma}_{\ell\ell}=\mathrm{diag}\left\{\frac{\beta_{\ell\ell,1}^2}{\hat{\beta}_{\ell 1}},\cdots,\frac{\beta_{\ell\ell,K}^2}{\hat{\beta}_{\ell K}}\right\}$ where $\hat{\beta}_{\ell k} \triangleq \sum_{j=1}^{L}\beta_{j\ell,k}$  $+\frac{1}{\tau_{\mathrm{u}}P_{\mathrm{u}}}$ as in \cite{Tulino}. Thus, $X$ follows a complex central Wishart distribution with $M-K+1$ degrees of freedom and scale parameter $\frac{\beta_{\ell\ell,k}^{2}}{\hat{\beta}_{\ell k}}$. Therefore, the PDF of $X$ is given as \cite{ GHP:02:COML}
\begin{equation}
p_{X}(x) = \frac{e^{-x\hat{\beta}_{\ell k}/\beta_{\ell\ell,k}^{2}}}{(M-K)!\beta_{\ell\ell,k}^{2}/\hat{\beta}_{\ell k}}\left(\frac{x\hat{\beta}_{\ell k}}{\beta_{\ell\ell,k}^{2}}\right)^{M-K}, \quad x>0.
\label{20}
\end{equation}

\lbn
\begin{proposition}
\label{Proposition1}
The PDF of the uplink SINR for the $k$-th user in the $\ell$-th cell with the ZF receiver is given as
\begin{equation}
\begin{aligned}
&p_{\mathbf{\gamma}_{\ell,k}}(s)\\
&= \theta^{M-K+1}\frac{ s^{M-K}}{\left(1-\eta s\right)^{M-K+2}}\exp\left(-\frac{\theta s}{1-\eta s}\right),\quad s<\frac{1}{\eta},
\end{aligned}
\label{22}
\end{equation}
\noindent where $\theta \triangleq \sum_{i=1}^{L}\alpha_{i,\ell} + \frac{1}{P_{\mathrm{u}}}$ and $\eta \triangleq \frac{\sum_{i=1,i\neq \ell}^{L}\beta_{i\ell,k}^{2}}{\beta_{\ell\ell,k}^{2}}$.
\end{proposition}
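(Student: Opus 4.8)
The plan is to obtain $p_{\mathbf{\gamma}_{\ell,k}}$ by a single change of variables from the density of $X$ already recorded in \eqref{20}. Equation \eqref{21} exhibits the SINR as a deterministic, scalar function of $X$, namely $\mathbf{\gamma}_{\ell,k} = g(X)$ with $g(x) = x/(\theta + \eta x)$; since $X$ carries all the randomness and its law is the complex central Wishart (equivalently Gamma) density in \eqref{20}, no joint distribution or auxiliary integration is needed, and the transformation theorem for a monotone function of one continuous random variable applies directly.

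First I would check monotonicity and identify the range. Differentiating gives $g'(x) = \theta/(\theta+\eta x)^2 > 0$, because $\theta = \sum_i \alpha_{i\ell} + 1/P_{\mathrm{u}} > 0$; together with $g(0)=0$ and $\lim_{x\to\infty} g(x) = 1/\eta$ this shows $g$ is a bijection from $(0,\infty)$ onto $(0,1/\eta)$, matching the claimed support $s < 1/\eta$. Inverting $s = x/(\theta+\eta x)$ yields $x = g^{-1}(s) = \theta s/(1-\eta s)$, and differentiating gives the Jacobian
\[
\frac{dx}{ds} = \frac{\theta}{(1-\eta s)^2}.
\]
I would then form $p_{\mathbf{\gamma}_{\ell,k}}(s) = p_X\bigl(g^{-1}(s)\bigr)\,\bigl|dx/ds\bigr|$. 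Substituting $x = \theta s/(1-\eta s)$ into \eqref{20}, the factor $x^{M-K}$ produces $\bigl(\theta s/(1-\eta s)\bigr)^{M-K}$, the exponential produces $\exp\bigl(-\hat\beta_{\ell k}\beta_{\ell\ell,k}^{-2}\,\theta s/(1-\eta s)\bigr)$, and multiplication by the Jacobian raises the power of $\theta$ to $M-K+1$ and the power of $(1-\eta s)$ in the denominator to $M-K+2$, reproducing the $s^{M-K}/(1-\eta s)^{M-K+2}$ structure of \eqref{22}.

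The step demanding the most care is the bookkeeping of the multiplicative constants in \eqref{20}: the Wishart scale $\hat\beta_{\ell k}/\beta_{\ell\ell,k}^2$ and the normalizing factor $(M-K)!$ must be carried through the substitution and combined with the powers of $\theta$ generated by $x^{M-K}$ and the Jacobian to assemble the overall prefactor and the coefficient inside the exponential of \eqref{22}. As a safeguard I would verify $\int_0^{1/\eta} p_{\mathbf{\gamma}_{\ell,k}}(s)\,ds = 1$ via the reverse substitution that linearizes the exponent (setting $t$ equal to the argument of the exponential), which reduces the integral to $\frac{1}{(M-K)!}\int_0^\infty t^{M-K}e^{-t}\,dt = 1$; this simultaneously confirms the support $s<1/\eta$ and pins down the normalization. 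Everything beyond these checks is routine algebraic simplification.
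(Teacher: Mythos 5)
Your proof is correct, and it pivots on the same fact as the paper's proof: the SINR is the monotone image $g(X)=X/(\theta+\eta X)$ of the Gamma-distributed variable $X$ whose density is \eqref{20}, so everything reduces to inverting $s=g(x)$ into $x=\theta s/(1-\eta s)$. The execution differs, though. The paper first computes the CDF $F_{\mathbf{\gamma}_{\ell,k}}(s)=\Pr\bigl(X<\theta s/(1-\eta s)\bigr)$ in closed form as the finite sum \eqref{25}, using \cite[Eq.~(3.381.1)]{GR:07:Book}, and then differentiates that sum (via \cite[Eq.~(8.356.4)]{GR:07:Book}) to reach \eqref{22}; you apply the one-variable transformation theorem $p_{\mathbf{\gamma}_{\ell,k}}(s)=p_X\bigl(g^{-1}(s)\bigr)\,\bigl|dx/ds\bigr|$ directly, which is the same chain-rule differentiation carried out in one step, without ever expanding the incomplete gamma function. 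Your route is shorter and avoids both table look-ups; the paper's route has the side benefit that the closed-form CDF \eqref{25} is exactly what is reused later for the outage probability \eqref{32}, which your argument alone would not supply. One substantive remark: your insistence on constant bookkeeping, and your normalization safeguard, actually expose two slips in the paper. Carrying \eqref{20} through the substitution honestly gives
\[
p_{\mathbf{\gamma}_{\ell,k}}(s)=\frac{1}{(M-K)!}\left(\frac{\hat{\beta}_{\ell k}\,\theta}{\beta_{\ell\ell,k}^{2}}\right)^{M-K+1}\frac{s^{M-K}}{\left(1-\eta s\right)^{M-K+2}}\exp\left(-\frac{\hat{\beta}_{\ell k}\,\theta}{\beta_{\ell\ell,k}^{2}}\cdot\frac{s}{1-\eta s}\right),
\]
so \eqref{22} as printed omits the factor $1/(M-K)!$ (without it the density integrates to $(M-K)!$, exactly what your normalization check detects) and omits the Wishart scale $\hat{\beta}_{\ell k}/\beta_{\ell\ell,k}^{2}$ that should multiply $\theta$; the paper drops this scale silently between the second and third lines of \eqref{25} (its necessity is confirmed by the definitions of $a$ and $b$ in Proposition~3, where it reappears). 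These are defects of the printed statement, not of your argument: modulo absorbing that scale into $\theta$ and restoring the factorial, your derivation and the paper's proof agree.
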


\lbn
\begin{proof}
From \eqref{21}, we can compute the cumulative distribution function (CDF) of $\mathbf{\gamma}_{\ell,k}$ as
\begin{equation}
\begin{aligned}
F_{\mathbf{\gamma}_{\ell,k}}(s) = \Pr\left(\frac{X}{\theta+\eta X}<s\right) 
= \Pr\left(X<\frac{\theta s}{1-\eta s}\right).
\end{aligned}
\label{24}
\end{equation}

\begin{itemize}
  \item If $s < 1/\eta$, by using ~\cite[Eq.~(3.381.1)]{GR:07:Book}, \eqref{24} can be written as \\
	\begin{equation}
	\begin{aligned}
	F_{\mathbf{\gamma}_{\ell, k}}(s)&=\int_{0}^{\frac{\theta s}{1-\eta s}}\frac{e^{-x\hat{\beta}_{\ell k}/\beta_{\ell\ell,k}^{2}}}{(M-K)!\beta_{\ell\ell, k}^{2}/\hat{\beta}_{\ell k}}\left(\frac{x\hat{\beta}_{\ell k}}{\beta_{\ell k}^{2}}\right)^{M-K}dx\\
	&=\int_{0}^{\frac{\hat{\beta}_{\ell k}}{\beta_{\ell\ell,k}^{2}}\frac{\theta s}{1-\eta s}}\frac{e^{-\hat{x}}}{(M-K)!}\hat{x}^{M-K}d\hat{x}\\
	&=1-\exp\left({-\frac{\theta s}{1-\eta s}}\right)\sum_{i=0}^{M-K}\frac{1}{i!}\left(\frac{\theta s}{1-\eta s}\right)^{i},\\
   \end{aligned}
\label{25}
\end{equation}
\noindent where  $\hat{x}=\frac{\hat{\beta}_{\ell k}}{\beta_{\ell\ell,k}^{2}}x$.      
  \item If $s\geq 1/\eta$, \eqref{24} becomes\\
       \begin{equation}
			F_{\mathbf{\gamma}_{\ell,k}}(s)=1.
			\label{26}
      \end{equation} 
\end{itemize}
\noindent We can derive the PDF in \eqref{22} by applying the relationship $p_{\mathbf{\gamma}_{\ell,k}}(s) = \frac{dF_{\mathbf{\gamma}_{\ell,k}}(s)}{ds}$ between the CDF and the PDF to \eqref{25} and \eqref{26}, and by using ~\cite[Eq.~(8.356.4)]{GR:07:Book}. 
\end{proof}

\subsection{Asymptotic Analysis}

One of the key properties of the massive MIMO is that the number of BS antennas, $M$, can grow without limit, whereas the number of users, $K$, and the transmit power of each user are finite. The performance will improve as the number of antennas increases. However, the improvement will be limited by the intercell interference. In this subsection, we present asymptotic analysis of the SINR to get insight into the system behavior for the case of infinite number of antennas.

\lbn
\begin{proposition}
\label{Proposition2}
When $P_{\mathrm{u}}$ and $K$ are fixed, and the number of antennas at the BS, $M$, grows without bound, the effective uplink SINR of the $k$-th user in the $\ell$-th cell for the ZF receiver approaches the same value as that of the MRC receiver in \cite{ MARZ:10:WCOM}, which is given by

\begin{equation}
\mathrm{SINR}_{\ell,k}^{\infty}=\frac{\beta_{\ell\ell,k}^{2}}{\sum_{j=1,j\neq \ell}^{L}\beta_{j\ell,k}^{2}}.
\label{13}
\end{equation}
\end{proposition}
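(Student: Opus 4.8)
The plan is to start from the exact, non-asymptotic SINR expression \eqref{21} and isolate the only quantity that depends on $M$. Writing
\begin{equation}
\mathbf{\gamma}_{\ell,k}=\frac{X}{\theta+\eta X}=\frac{1}{\theta/X+\eta},
\notag
\end{equation}
I observe that both $\theta=\sum_{i=1}^{L}\alpha_{i\ell}+\frac{1}{P_{\mathrm{u}}}$ and $\eta=\frac{\sum_{i=1,i\neq\ell}^{L}\beta_{i\ell,k}^{2}}{\beta_{\ell\ell,k}^{2}}$ are constants that do not involve $M$; all of the antenna dependence is carried by $X=\bigl[\bigl(\hat{\mathbf{G}}_{\ell\ell}^{\dagger}\hat{\mathbf{G}}_{\ell\ell}\bigr)^{-1}\bigr]_{kk}^{-1}$. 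Hence the whole claim reduces to showing that $X\to\infty$ as $M\to\infty$, for then $\theta/X\to0$ and $\mathbf{\gamma}_{\ell,k}\to 1/\eta$, which is exactly the right-hand side of \eqref{13}.

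To establish the divergence of $X$ I would use the distribution already derived in \eqref{20}: $X$ is Gamma-distributed with shape $M-K+1$ and scale $\beta_{\ell\ell,k}^{2}/\hat{\beta}_{\ell k}$, so that $\mathbb{E}\{X\}=(M-K+1)\beta_{\ell\ell,k}^{2}/\hat{\beta}_{\ell k}$ and $\mathrm{Var}(X)=(M-K+1)\bigl(\beta_{\ell\ell,k}^{2}/\hat{\beta}_{\ell k}\bigr)^{2}$. The squared coefficient of variation is then $\mathrm{Var}(X)/\mathbb{E}\{X\}^{2}=1/(M-K+1)$, which vanishes as $M\to\infty$; Chebyshev's inequality gives $X/\mathbb{E}\{X\}\to1$ in probability, and since $\mathbb{E}\{X\}\to\infty$ it follows that $X\to\infty$. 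Equivalently --- and this is the physical reading that ties back to the law-of-large-numbers argument invoked in the introduction --- the columns of $\hat{\mathbf{G}}_{\ell\ell}$ have i.i.d.\ entries, so $\frac{1}{M}\hat{\mathbf{G}}_{\ell\ell}^{\dagger}\hat{\mathbf{G}}_{\ell\ell}\to\mathbf{\Sigma}_{\ell\ell}$, whence $\bigl[\bigl(\hat{\mathbf{G}}_{\ell\ell}^{\dagger}\hat{\mathbf{G}}_{\ell\ell}\bigr)^{-1}\bigr]_{kk}$ decays like $1/M$ and $X$ grows without bound. Combining the two steps yields $\lim_{M\to\infty}\mathbf{\gamma}_{\ell,k}=1/\eta=\beta_{\ell\ell,k}^{2}\big/\sum_{j=1,j\neq\ell}^{L}\beta_{j\ell,k}^{2}$, which coincides with the MRC limit of \cite{MARZ:10:WCOM} and proves \eqref{13}.

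I expect the one genuinely delicate point to be the stochastic bookkeeping in the divergence of $X$: the deterministic limit $1/(\theta/X+\eta)\to1/\eta$ is immediate once $X\to\infty$, so the real content is the concentration argument certifying that $X$ indeed diverges, and does so quickly enough that the fixed finite constant $\theta$ becomes negligible relative to $\eta X$. It is also worth flagging the degenerate case $\eta=0$, i.e.\ a user suffering no inter-cell large-scale interference: there the saturation effect disappears and the SINR grows unboundedly, so the stated finite limit implicitly presumes $\eta>0$, which holds in the multicell setting $L>1$ with nonzero cross-cell gains considered throughout.
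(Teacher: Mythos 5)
Your proof is correct, but it takes a genuinely different route from the paper's. The paper never touches the SINR formula \eqref{21} in this proof: it goes back to the signal model, expands $\mathbf{r}_{\ell}=\mathbf{A}_{\ell}^{\dagger}\mathbf{y}_{\ell}$ in terms of the raw channels as in \eqref{14}, and applies the law of large numbers to the normalized inner products $\frac{1}{M}\hat{\mathbf{G}}_{\ell\ell}^{\dagger}\hat{\mathbf{G}}_{\ell\ell}$ and $\frac{1}{M}\mathbf{G}_{i\ell}^{\dagger}\mathbf{G}_{j\ell}$ (see \eqref{limit}), showing that the processed received vector converges to a deterministic diagonal factor times $\sqrt{P_{\mathrm{u}}}\mathbf{D}_{\ell\ell}\mathbf{x}_{\ell}+\sqrt{P_{\mathrm{u}}}\sum_{j\neq\ell}\mathbf{D}_{j\ell}\mathbf{x}_{j}$, from which \eqref{13} is read off as the ratio of squared coefficients. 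You instead reuse the machinery of Proposition~\ref{Proposition1}: all $M$-dependence is isolated in $X$, the Gamma law \eqref{20} plus Chebyshev gives $X/\mathbb{E}\{X\}\to1$ and hence $X\to\infty$, and continuity of $t\mapsto 1/(t+\eta)$ finishes the argument. Your route is shorter, makes the saturation mechanism ($\theta/X\to 0$ while $\eta$ stays fixed) transparent, and even quantifies the concentration rate via the coefficient of variation $1/(M-K+1)$; its cost is that it inherits the exactness of \eqref{21} and the Wishart characterization \eqref{20}, and the Chebyshev step delivers convergence in probability rather than the almost-sure convergence the paper obtains (your parenthetical LLN remark on $\frac{1}{M}\hat{\mathbf{G}}_{\ell\ell}^{\dagger}\hat{\mathbf{G}}_{\ell\ell}\to\mathbf{\Sigma}_{\ell\ell}$ closes that gap). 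What the paper's signal-level derivation buys is interpretability: the surviving interference is exhibited explicitly as the pilot-contamination terms $\mathbf{D}_{j\ell}\mathbf{x}_{j}$, so the claimed coincidence with the MRC limit of \cite{MARZ:10:WCOM} follows by structural comparison rather than only by matching the final formula. Your flag that the finite limit presumes $\eta>0$ (i.e., $L>1$ with nonzero cross gains) is a valid point the paper leaves implicit.
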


\lbn
\begin{proof}
From \eqref{1} and \eqref{8}, we have
\begin{small}
\begin{equation}
\begin{aligned}
&\mathbf{r}_{\ell} = \mathbf{A}_{\ell}^{\dagger}\mathbf{y}_{\ell} = 
\left( \hat{\mathbf{G}}_{\ell\ell}^{\dagger} \hat{\mathbf{G}}_{\ell\ell}\right)^{-1} 
 \hat{\mathbf{G}}_{\ell\ell}^{\dagger} \Bigl(\sqrt{P_{\mathrm{u}}}\sum_{j=1}^{L}\mathbf{G}_{j\ell}\mathbf{x}_{j} + \mathbf{n}_{\ell}\Bigr) \\
&=\left( \hat{\mathbf{G}}_{\ell\ell}^{\dagger} \hat{\mathbf{G}}_{\ell\ell}\right)^{-1} 
\hat{\mathbf{D}}_{\ell\ell} \Bigl(\sqrt{P_{\mathrm{u}}}\sum_{i=1}^{L}\sum_{j=1}^{L}\mathbf{G}_{i\ell}^{\dagger}\mathbf{G}_{j\ell} \mathbf{x}_{j} + \sum_{i=1}^{L}\mathbf{G}_{i\ell}^{\dagger}\mathbf{n}_{\ell}\\
&\quad\quad +\frac{1}{\sqrt{\tau_\mathrm{u}}}\sum_{j=1}^{L}\mathbf{W}_{\ell}^{\dagger}\mathbf{G}_{j\ell}\mathbf{x}_{j} +\frac{1}{\sqrt{\tau_{\mathrm{u}}P_{\mathrm{u}}}}\mathbf{W}_{\ell}^{\dagger} \mathbf{n}_{\ell}\Bigl).
\end{aligned}
\label{14}
\end{equation}
\end{small}
When $K$ is fixed and $M$ goes to infinity, we have \footnote{Assume that the $k$-th columns $\mathbf{h}_{i\ell,k}$ and $\mathbf{h}_{j\ell,k}$ of $\mathbf{H}_{i\ell}$ and $\mathbf{H}_{j\ell k}$, respectively, are mutually independent $M\times 1$ vectors, whose elements are i.i.d. random variables with zero mean and unit variance. From the law of large numbers, we have
$\frac{1}{M}\mathbf{h}_{i\ell,k}^{\dagger}\mathbf{h}_{i\ell,k}\stackrel{a.s}{\rightarrow}1$ and $\frac{1}{M}\mathbf{h}_{i\ell,k}^{\dagger}\mathbf{h}_{j\ell,k}\stackrel{a.s}{\rightarrow}0$ as $M\rightarrow\infty$, where $\stackrel{a.s}{\rightarrow}$ denotes the almost sure convergence.} \cite{ HLM:13:COM, HC:70:Book}
\begin{equation}
\begin{aligned}
&\frac{ \hat{\mathbf{G}}_{\ell\ell}^{\dagger} \hat{\mathbf{G}}_{\ell\ell}}{M} \stackrel{M\rightarrow\infty} {\rightarrow}\hat{\mathbf{D}}_{\ell\ell}\Bigl(\sum_{i=1}^{L}\mathbf{D}_{i\ell}+\frac{1}{\tau_{\mathrm{u}}P_{\mathrm{u}}}\mathbf{I}_{K}\Bigr)\hat{\mathbf{D}}_{\ell\ell},\\
&\sum_{i=1}^{L}\sum_{j=1}^{L}\frac{\mathbf{G}_{i\ell}^{\dagger}\mathbf{G}_{j\ell}}{M}\stackrel{M\rightarrow\infty}{\rightarrow}\mathbf{D}_{\ell\ell}+\sum_{j=1,j\neq \ell}^{L}\mathbf{D}_{j\ell},
\end{aligned}
\label{limit}
\end{equation}
and all the other terms in the last parenthesis of \eqref{14} converge to zero, where the notation $\stackrel{M\rightarrow\infty}{\rightarrow}$ is used to indicate that the number of antennas at the BS goes to infinity. 

\noindent Then, substituting \eqref{limit} into \eqref{14}, we obtain
\begin{equation}
\begin{aligned}
\mathbf{r}_{\ell}\stackrel{M\rightarrow\infty}{\rightarrow} &\left(\hat{\mathbf{D}}_{\ell\ell}\Bigl(\sum_{i=1}^{L}\mathbf{D}_{i\ell}+\frac{1}{\tau_{\mathrm{u}}P_{\mathrm{u}}}\mathbf{I}_{K}\Bigr)\hat{\mathbf{D}}_{\ell\ell}\right)^{-1}\hat{\mathbf{D}}_{\ell\ell}\\
&\times \Bigl(\sqrt{P_{\mathrm{u}}}\mathbf{D}_{\ell\ell}\mathbf{x}_{\ell}+\sqrt{P_{\mathrm{u}}}\sum_{j=1,j\neq \ell}^{L}\mathbf{D}_{j\ell}\mathbf{x}_{j}\Bigr)
\end{aligned}
\label{limit_sir}
\end{equation}
Therefore, the asymptotic uplink SINR for the $k$-th user in the $\ell$-th cell is obtained as \eqref{13}. 
\end{proof}
\lbn

We can see that the MRC receiver in \cite{MARZ:10:WCOM} and the ZF receiver in this paper exhibit the same performance when the BS is equipped with a very large number of antennas, i.e., as $M\rightarrow\infty$. \eqref{13} indicates that the effect of the interference does not vanish due to pilot contamination, even though the number of BS antennas goes to infinity. Furthermore, the SINR in \eqref{13} is independent of $P_{\mathrm{u}}$, which means that we can exploit a lower power regime while maintaining the limit SINR of \eqref{13} constant. However, in practice, the number of users and the transmit power of each user are not always fixed. For example, the following two scenarios in Corollary 1 and Corollary 2 should be considered.

\lbn
\begin{corollary}
When $P_{\mathrm{u}}$ is fixed, and $M$, $K$ grow without bound with $\mu\triangleq M/K >1$ remaining constant, the effective uplink SINR of the $k$-th user in the $\ell$-th cell in \eqref{15} can be deterministically approximated for the ZF receiver to
\begin{equation}
\begin{aligned}
&\overline{\mathrm{SINR}}_{\ell,k}\\
&=\frac{\beta_{\ell\ell,k}^2(\mu-1)}{\beta_{\ell\ell, k}^2\hat{\beta}_{\ell k}\sum_{i=1}^{L}\alpha_{i\ell}/K +\sum_{i=1,i\neq\ell}^{L}\beta_{i\ell,k}^2(\mu-1)}.
\label{eq:sinr1}
\end{aligned}
\end{equation}
\end{corollary}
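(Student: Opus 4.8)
The plan is to treat \eqref{21} as a deterministic function of the single random variable $X$ and to show that, under the prescribed scaling, $X$ concentrates sharply around its mean once normalized by $K$, so that $\gamma_{\ell,k}$ converges to a deterministic limit. First I would record the relevant moments of $X$. From the density \eqref{20}, $X$ is a Gamma random variable with shape parameter $M-K+1$ and scale $\beta_{\ell\ell,k}^2/\hat{\beta}_{\ell k}$, so that $\mathbb{E}\{X\}=(M-K+1)\beta_{\ell\ell,k}^2/\hat{\beta}_{\ell k}$ and $\mathrm{Var}\{X\}=(M-K+1)(\beta_{\ell\ell,k}^2/\hat{\beta}_{\ell k})^2$.

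Next I would pass to the normalized variable $X/K$. Since $\mu=M/K$ is held fixed, $(M-K+1)/K\to\mu-1$, whence $\mathbb{E}\{X/K\}\to(\mu-1)\beta_{\ell\ell,k}^2/\hat{\beta}_{\ell k}$, while $\mathrm{Var}\{X/K\}=(M-K+1)(\beta_{\ell\ell,k}^2/\hat{\beta}_{\ell k})^2/K^2=O(1/K)\to 0$. By Chebyshev's inequality (and Borel--Cantelli along a linearly growing sequence of dimensions for the almost-sure statement), $X/K$ converges to the deterministic constant $(\mu-1)\beta_{\ell\ell,k}^2/\hat{\beta}_{\ell k}$.

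The remaining ingredient is the scaling of $\theta=\sum_{i=1}^{L}\alpha_{i\ell}+1/P_{\mathrm{u}}$. Because each $\alpha_{i\ell}$ is itself a sum of $K$ bounded terms, $\theta$ grows linearly in $K$; hence $\theta/K\to\sum_{i=1}^{L}\alpha_{i\ell}/K$ (a finite quantity, the additive $1/(KP_{\mathrm{u}})$ vanishing), whereas $\eta=\sum_{i\neq\ell}\beta_{i\ell,k}^2/\beta_{\ell\ell,k}^2$ does not depend on $K$. I would therefore rewrite \eqref{21} in the scale-invariant form $\gamma_{\ell,k}=(X/K)/(\theta/K+\eta\,X/K)$ and invoke the continuous mapping theorem, the map $(u,v)\mapsto u/(v+\eta u)$ being continuous wherever the denominator is positive, which holds here because the limiting denominator is strictly positive. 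Substituting the two limits above, inserting the explicit value of $\eta$, clearing $\hat{\beta}_{\ell k}$, and collecting terms then yields the claimed deterministic equivalent \eqref{eq:sinr1}.

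The main obstacle is precisely this last passage to the limit: $X$ and $\theta$ both diverge with $K$, so one cannot simply substitute a limiting value for $X$ in \eqref{21}. The resolution is the joint $1/K$ normalization, which turns both the numerator and the denominator into quantities with finite, strictly positive limits; only then is the ratio continuous at the limit point and the continuous mapping theorem applicable. A secondary point requiring care is the mode of convergence: the estimate $\mathrm{Var}\{X/K\}=O(1/K)$ gives convergence in probability at once, and almost-sure convergence along any at-least-linearly growing sequence of dimensions follows from summability of the variances.
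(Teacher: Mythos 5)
Your proposal is correct in substance, and it is genuinely more rigorous than the paper's own argument. The paper's proof consists of plugging $\mathbb{E}\{X\}=(M-K+1)\beta_{\ell\ell,k}^{2}/\hat{\beta}_{\ell k}$ into \eqref{21}, displaying the result as a bound on $\overline{\mathrm{SINR}}_{\ell,k}$ (see \eqref{proof_C1}), letting $M,K\to\infty$ with $\mu$ fixed, and then asserting, without proof, that ``a lower bound will approach the exact value when the number of antennas at the BS grows.'' You prove precisely the assertion the paper leaves hanging: the second-moment computation $\mathrm{Var}\{X/K\}=(M-K+1)\beta_{\ell\ell,k}^{4}/\bigl(\hat{\beta}_{\ell k}^{2}K^{2}\bigr)=O(1/K)$ shows that $X/K$ concentrates at $(\mu-1)\beta_{\ell\ell,k}^{2}/\hat{\beta}_{\ell k}$, and the joint $1/K$ normalization of \eqref{21} together with the continuous mapping theorem turns the mean substitution into an actual limit. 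This also quietly repairs a logical wrinkle in the paper: the map $x\mapsto x/(\theta+\eta x)$ is concave, so Jensen gives $\mathbb{E}\{\gamma_{\ell,k}\}\le\gamma_{\ell,k}(\mathbb{E}\{X\})$, which makes the paper's ``$\geq$'' at best unexplained; in your argument the direction of that inequality is irrelevant, because the fluctuations of $X/K$ vanish asymptotically and the substitution becomes exact rather than a one-sided bound.

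Two caveats. First, your side remark on almost-sure convergence is off: Chebyshev only gives deviation probabilities of order $O(1/K)$, which are \emph{not} summable along a linearly growing sequence of dimensions, so Borel--Cantelli does not apply as you state it; you would need either a faster-growing subsequence (e.g.\ $K_n=n^{2}$) or the exponential Chernoff-type tail bounds available for Gamma variables. This does not harm the corollary, since convergence in probability is all that ``deterministically approximated'' requires. Second, if you actually finish the algebra, multiplying the numerator and denominator of \eqref{21} by $\hat{\beta}_{\ell k}/K$ yields the denominator $\hat{\beta}_{\ell k}\sum_{i=1}^{L}\alpha_{i\ell}/K+(\mu-1)\sum_{i=1,i\neq\ell}^{L}\beta_{i\ell,k}^{2}$, i.e.\ \emph{without} the factor $\beta_{\ell\ell,k}^{2}$ multiplying the first term that appears in \eqref{eq:sinr1}; that spurious factor is an inconsistency internal to the paper (its proof \eqref{proof_C1} carries the same mismatch relative to \eqref{21}), so your closing claim that collecting terms ``yields \eqref{eq:sinr1}'' glosses over a discrepancy you would have noticed had you written out the last step.
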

\lbn

\begin{proof}
From  \eqref{21}, with   $\mathbb{E}\left\{X\right\}=\left(M-K+1\right)\frac{\beta_{\ell\ell,k}^{2}}{\hat{\beta}_{\ell k}}$, we have
\begin{equation}
\begin{aligned}
&\overline{\mathrm{SINR}}_{\ell,k}\geq \\
& \frac{\beta_{\ell\ell,k}^2(M-K+1)}{\beta_{\ell\ell, k}^2\hat{\beta}_{\ell k}\left(\sum_{i=1}^{L}\alpha_{i\ell}+\frac{1}{P_{\mathrm{u}}}\right) +\sum_{i=1,i\neq\ell}^{L}\beta_{i\ell,k}^2(M-K+1)}\\
&\stackrel{M,K\rightarrow\infty}{\rightarrow}\frac{\beta_{\ell\ell,k}^2(M/K-1)}{\beta_{\ell\ell, k}^2\hat{\beta}_{\ell k}\sum_{i=1}^{L}\frac{\alpha_{i\ell}}{K} +\sum_{,i=1,i\neq\ell}^{L}\beta_{i\ell,k}^2\frac{M}{K-1}}.
\end{aligned}
\label{proof_C1}
\end{equation}
In fact, a lower bound will approach the exact value when the number of antennas at the BS grows. From \eqref{proof_C1} with $\mu=M/K$, \eqref{eq:sinr1} is obtained.
\end{proof}
\lbn

\begin{figure*}[t]
\begin{align}\label{eq:exact}
R_{\ell,k} = \sum_{\mu=0}^{M-K}\frac{\log_{2}e}{(M-K-\mu)!} \times&\left[(-1)^{M-K-\mu-1}\left(\frac{1}{a^{M-K-\mu}}e^{\frac{1}{a}}\mathrm{Ei}\left(-\frac{1}{a}\right)-\frac{1}{b^{M-K-\mu}}e^{\frac{1}{b}}\mathrm{Ei}\left(-\frac{1}{b}\right)\right) \nonumber\right. \\
\hspace{0.7cm}
&+\left.\sum_{k=1}^{M-K-\mu}(k-1)!\left(\left(-\frac{1}{a}\right)^{M-K-\mu-k}-\left(-\frac{1}{b}\right)^{M-K-\mu-k}\right)\right].
\end{align}
\hrule
\end{figure*}

Unlike the limit SINR in \eqref{13}, the limit SINR in \eqref{eq:sinr1} depends on the transmit power of the user. In practice, however, the effect of the transmit power will vanish quickly in a high power regime. In addition, the SINR in \eqref{eq:sinr1} decreases significantly when $K \rightarrow M$ and is equivalent to \eqref{13} as $K \ll M$. This shows that as $K$ increases, the impact of the pilot contamination will become more severe.

\lbn
\begin{corollary}
Let $E_{\mathrm{u}}\triangleq P_{\mathrm{u}}M$, where $E_{\mathrm{u}}$ and $K$ are fixed. When $M$ grows without bound, the effective uplink $\mathrm{SINR}$ for the $k$-th user in the $\ell$-th cell of the ZF receiver can be deterministically approximated to
\begin{equation}
\begin{aligned}
\overline{\mathrm{SINR}}_{\ell,k}=\frac{\tau_{\mathrm{u}}E_{\mathrm{u}}^2\beta_{\ell\ell,k}^2}{\tau_{\mathrm{u}}E_{\mathrm{u}}^2\sum_{i=1,i\neq\ell}^{L}\beta_{i\ell,k}^2+M}.
\label{eq:sinr2}
\end{aligned}
\end{equation}
\end{corollary}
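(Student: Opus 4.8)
The plan is to reuse the deterministic-equivalent argument from the proof of Corollary~1. I start from the SINR expression \eqref{21}, whose only random quantity is the Wishart quadratic form $X = 1/\bigl[(\hat{\mathbf{G}}_{\ell\ell}^{\dagger}\hat{\mathbf{G}}_{\ell\ell})^{-1}\bigr]_{kk}$, and replace $X$ by its mean $\mathbb{E}\{X\} = (M-K+1)\beta_{\ell\ell,k}^{2}/\hat{\beta}_{\ell k}$. As in Corollary~1, this substitution is asymptotically tight because the relative fluctuations of $X$ about its mean vanish as $M$ grows, so the resulting deterministic ratio approximates $\overline{\mathrm{SINR}}_{\ell,k}$ with an error that disappears in the limit. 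After clearing the common factor $\hat{\beta}_{\ell k}$ from numerator and denominator, \eqref{21} becomes the elementary rational function
\[
\overline{\mathrm{SINR}}_{\ell,k}=\frac{(M-K+1)\beta_{\ell\ell,k}^{2}}{\hat{\beta}_{\ell k}\bigl(\sum_{i=1}^{L}\alpha_{i\ell}+\frac{1}{P_{\mathrm{u}}}\bigr)+(M-K+1)\sum_{i=1,i\neq\ell}^{L}\beta_{i\ell,k}^{2}}.
\]

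The second step is to impose the power-scaling law $P_{\mathrm{u}}=E_{\mathrm{u}}/M$ and to determine the growth order of every factor as $M\to\infty$ with $E_{\mathrm{u}}$ and $K$ held fixed. Here $1/P_{\mathrm{u}}=M/E_{\mathrm{u}}$ and $\hat{\beta}_{\ell k}=\sum_{j=1}^{L}\beta_{j\ell,k}+M/(\tau_{\mathrm{u}}E_{\mathrm{u}})$, so $\hat{\beta}_{\ell k}$ is dominated by the noise-enhancement term and behaves like $M/(\tau_{\mathrm{u}}E_{\mathrm{u}})$. At the same time the pilot-contamination error $\sum_{i=1}^{L}\alpha_{i\ell}$ is $O(P_{\mathrm{u}})=O(1/M)$ and therefore vanishes, since each $\alpha_{i\ell}$ carries a factor $\tau_{\mathrm{u}}P_{\mathrm{u}}$ in its numerator while its denominator tends to unity. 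Consequently the cross product $\hat{\beta}_{\ell k}\sum_{i}\alpha_{i\ell}$ is only $O(1)$ and is negligible beside $\hat{\beta}_{\ell k}/P_{\mathrm{u}}=\Theta(M^{2})$.

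Finally I collect the leading-order terms. Using $M-K+1\sim M$, the numerator is $\sim M\beta_{\ell\ell,k}^{2}$, while the surviving part of the denominator is $\hat{\beta}_{\ell k}/P_{\mathrm{u}}+(M-K+1)\sum_{i\neq\ell}\beta_{i\ell,k}^{2}\sim M^{2}/(\tau_{\mathrm{u}}E_{\mathrm{u}}^{2})+M\sum_{i\neq\ell}\beta_{i\ell,k}^{2}$; dividing numerator and denominator by $M$ and then multiplying both by $\tau_{\mathrm{u}}E_{\mathrm{u}}^{2}$ reproduces \eqref{eq:sinr2}. The main obstacle is the bookkeeping in this last step: the quadratic noise term and the inter-cell term sit at different orders in $M$, and both must be retained, so one has to fix the order of approximation consistently. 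In particular, $\hat{\beta}_{\ell k}$ should be replaced by its leading piece $M/(\tau_{\mathrm{u}}E_{\mathrm{u}})$ when forming the dominant $M^{2}$ term, and one must verify that the constant $\sum_{j}\beta_{j\ell,k}$ inside $\hat{\beta}_{\ell k}$, together with $\sum_{i}\alpha_{i\ell}$, contributes only subdominant corrections, so that exactly the $M^{2}$ and $O(M)$ denominator terms emerge after the normalization and yield the stated closed form.
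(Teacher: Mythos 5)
Your route is genuinely different from the paper's, so it is worth saying what the paper actually does: it does \emph{not} reuse the Wishart-mean argument of Corollary~1. Instead it rewrites the received signal exactly as in the proof of Proposition~2 but with $P_{\mathrm{u}}=E_{\mathrm{u}}/M$ substituted, applies the law of large numbers to the Gram-matrix terms, and then invokes the Lindeberg--Levy central limit theorem to show $\mathbf{W}_{\ell}^{\dagger}\mathbf{n}_{\ell}/\sqrt{M}\sim\mathcal{CN}(\mathbf{0},\mathbf{I}_M)$; the surviving pilot-noise$\times$data-noise term, which carries the coefficient $M/\sqrt{\tau_{\mathrm{u}}E_{\mathrm{u}}}$, is precisely what produces the $M$ in the denominator of \eqref{eq:sinr2}. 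In your derivation that same $M$ instead enters through $\hat{\beta}_{\ell k}/P_{\mathrm{u}}$, which is legitimate in principle, since the $1/(\tau_{\mathrm{u}}P_{\mathrm{u}})$ term inside $\hat{\beta}_{\ell k}$ encodes the same pilot-noise contribution. The problem is that your bookkeeping does not close.

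Concretely, under $P_{\mathrm{u}}=E_{\mathrm{u}}/M$ your deterministic denominator expands as
\begin{equation}
\begin{aligned}
&\hat{\beta}_{\ell k}\Bigl(\sum_{i=1}^{L}\alpha_{i\ell}+\frac{1}{P_{\mathrm{u}}}\Bigr)+(M-K+1)\sum_{i=1,i\neq\ell}^{L}\beta_{i\ell,k}^{2}\\
&=\frac{M^{2}}{\tau_{\mathrm{u}}E_{\mathrm{u}}^{2}}+\frac{M}{E_{\mathrm{u}}}\sum_{j=1}^{L}\beta_{j\ell,k}+M\sum_{i=1,i\neq\ell}^{L}\beta_{i\ell,k}^{2}+O(1),
\end{aligned}
\end{equation}
and the cross term $\frac{M}{E_{\mathrm{u}}}\sum_{j}\beta_{j\ell,k}$ is $\Theta(M)$ --- exactly the same order as the pilot-contamination term $(M-K+1)\sum_{i\neq\ell}\beta_{i\ell,k}^{2}$ that you retain. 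Your closing claim that the constant $\sum_{j}\beta_{j\ell,k}$ inside $\hat{\beta}_{\ell k}$ ``contributes only subdominant corrections'' is therefore false: after your normalization (divide by $M$, multiply by $\tau_{\mathrm{u}}E_{\mathrm{u}}^{2}$) it survives as $\tau_{\mathrm{u}}E_{\mathrm{u}}\sum_{j}\beta_{j\ell,k}$, of the same order as the retained $\tau_{\mathrm{u}}E_{\mathrm{u}}^{2}\sum_{i\neq\ell}\beta_{i\ell,k}^{2}$. Carried out consistently, your argument yields
\begin{equation}
\overline{\mathrm{SINR}}_{\ell,k}=\frac{\tau_{\mathrm{u}}E_{\mathrm{u}}^{2}\beta_{\ell\ell,k}^{2}}{M+\tau_{\mathrm{u}}E_{\mathrm{u}}\sum_{j=1}^{L}\beta_{j\ell,k}+\tau_{\mathrm{u}}E_{\mathrm{u}}^{2}\sum_{i=1,i\neq\ell}^{L}\beta_{i\ell,k}^{2}},
\end{equation}
which agrees with \eqref{eq:sinr2} only at leading order (both behave as $\tau_{\mathrm{u}}E_{\mathrm{u}}^{2}\beta_{\ell\ell,k}^{2}/M$), not term by term. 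So either you drop \emph{all} $O(1)$ denominator terms uniformly --- in which case you must also drop $\tau_{\mathrm{u}}E_{\mathrm{u}}^{2}\sum_{i\neq\ell}\beta_{i\ell,k}^{2}$ and cannot recover the stated closed form --- or you must explain why one $O(1)$ term is kept and the other discarded. As written, the proof does not establish \eqref{eq:sinr2}; fixing it requires either the paper's CLT route or an honest statement that the approximation is only determined up to $O(1)$ terms in the denominator.
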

\lbn

\begin{proof}
In a manner similar to \eqref{14}, from \eqref{1} and \eqref{8}, we have
\begin{small}
\begin{equation}
\begin{aligned}
&\mathbf{r}_{\ell} = \mathbf{A}_{\ell}^{\dagger}\mathbf{y}_{\ell} = 
\left( \hat{\mathbf{G}}_{\ell\ell}^{\dagger} \hat{\mathbf{G}}_{\ell\ell}\right)^{-1} 
 \hat{\mathbf{G}}_{\ell\ell}^{\dagger} \Bigl(\sqrt{E_{\mathrm{u}}/M}\sum_{j=1}^{L}\mathbf{G}_{j\ell}\mathbf{x}_{j} + \mathbf{n}_{\ell}\Bigr) \\
&=\left( \hat{\mathbf{G}}_{\ell\ell}^{\dagger} \hat{\mathbf{G}}_{\ell\ell}\right)^{-1} 
\hat{\mathbf{D}}_{\ell\ell} \Bigl(\sqrt{E_{\mathrm{u}}M}\frac{\sum_{i=1}^{L}\sum_{j=1}^{L}\mathbf{G}_{i\ell}^{\dagger}\mathbf{G}_{j\ell} \mathbf{x}_{j}}{M}\\
&+ \sum_{i=1}^{L}\mathbf{G}_{i\ell}^{\dagger}\mathbf{n}_{\ell}  + \frac{1}{\sqrt{\tau_\mathrm{u}}}\sum_{j=1}^{L}\mathbf{W}_{\ell}^{\dagger}\mathbf{G}_{j\ell}\mathbf{x}_{j} +\frac{M}{\sqrt{\tau_{\mathrm{u}}E_{\mathrm{u}}}}\frac{\mathbf{W}_{\ell}^{\dagger} \mathbf{n}_{\ell}}{\sqrt{M}}\Bigr).
\end{aligned}
\label{eq:asymtotic1}
\end{equation}
\end{small}
When $K$ is fixed and $M$ goes to infinity, we have \cite{HC:70:Book}
\begin{equation}
\begin{aligned}
&\sum_{i=1}^{L}\sum_{j=1}^{L}\frac{\mathbf{G}_{i\ell}^{\dagger}\mathbf{G}_{j\ell}}{M}\stackrel{M\rightarrow\infty}{\rightarrow}\mathbf{D}_{\ell\ell}+\sum_{j=1,j\neq \ell}^{L}\mathbf{D}_{j\ell},\quad \forall i=j,\\
&\sum_{i=1}^{L}\frac{\mathbf{G}_{i\ell}^{\dagger}\mathbf{n}_{\ell}}{M}\stackrel{M\rightarrow\infty}{\rightarrow}\mathbf{0}\quad \mbox{and}\quad \sum_{j=1}^{L}\frac{\mathbf{W}_{\ell}^{\dagger}\mathbf{G}_{j\ell}}{M}\stackrel{M\rightarrow\infty}{\rightarrow}\mathbf{0},
\end{aligned}
\label{eq:asymtotic2}
\end{equation}
and the Lindeberg-Levy central limit theorem\footnote{With $\mathbf{h}_{i\ell,k}$ and $\mathbf{h}_{j\ell,k}$ defined in \eqref{limit}, from Lindeberg-Levy central limit theorem, we have $\frac{1}{\sqrt{M}}\mathbf{h}_{i\ell,k}^{\dagger}\mathbf{h}_{j\ell,k}\stackrel{d.}{\rightarrow}\mathcal{CN}(0,1)$ as $M\rightarrow\infty$, where $\stackrel{d.}{\rightarrow}$ denotes convergence in distribution.} can be used to obtain
\begin{equation}
\begin{aligned}
\frac{\mathbf{W}_{\ell}^{\dagger} \mathbf{n}_{\ell}}{\sqrt{M}}\sim \mathcal{CN} \left(\mathbf{0},\mathbf{I}_M\right).
\end{aligned}
\label{eq:asymtotic3}
\end{equation}
Applying \eqref{eq:asymtotic2} and \eqref{eq:asymtotic3} to \eqref{eq:asymtotic1}, we obtain \eqref{eq:sinr2}.
\end{proof}
\lbn

When $E_{\mathrm{u}}$ is fixed, we can also reduce the transmit power of each user in proportion to $1/M$ as $M$ grows, while preserving the transmission rate \cite{NMDL:13:VT}. However, as seen in \eqref{eq:sinr2}, interferences from other cells do not vanish. In particular, we can see that the interference due to pilot reuses, also known as pilot contamination, increases as $M$ grows, since the term $M$ is in the denominator of \eqref{eq:sinr2}.


\section{Achievable Rate, Outage Probability, and Symbol Error Rate}
\label{Performance}

In this section, we derive three important performance measures for the system: achievable uplink rate, outage probability, and SER.
\subsection{Achievable Uplink Rate}

\begin{proposition}
\label{Proposition3}
The exact achievable uplink rate for the $k$-th user in the $\ell$-th cell is given by \eqref{eq:exact}, which is shown at the top of this page, where $\mathrm{Ei}(z) \triangleq -\int_{-z}^{\infty}\frac{e^{-t}}{t}dt$,  $a \triangleq \frac{\sum_{i=1}^{L}\beta_{i\ell,k}^{2}}{\hat{\beta}_{\ell k}\left(\sum_{i=1}^{L}\alpha_{i\ell}+1/P_{\mathrm{u}}\right)}$, and $b \triangleq \frac{\sum_{i\neq \ell}^{L}\beta_{i\ell,k}^{2}}{\hat{\beta}_{\ell k}\left(\sum_{i=1}^{L}\alpha_{i\ell}+1/P_{\mathrm{u}}\right)}$.
\end{proposition}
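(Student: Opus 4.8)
The plan is to start from the definition of the ergodic rate, $R_{\ell,k}=\mathbb{E}\{\log_2(1+\gamma_{\ell,k})\}=\int_{0}^{1/\eta}\log_2(1+s)\,p_{\gamma_{\ell,k}}(s)\,ds$, and to avoid differentiating the PDF by integrating against the CDF instead. Since $\log_2(1+s)$ vanishes at $s=0$ and the complementary CDF $1-F_{\gamma_{\ell,k}}(s)$ vanishes at the upper endpoint $s=1/\eta$ (it equals $1$ at $s=0$ and $0$ for $s\ge1/\eta$ by \eqref{25}--\eqref{26}), the two boundary terms drop and one gets
\[
R_{\ell,k}=\log_2 e\int_{0}^{1/\eta}\frac{1-F_{\gamma_{\ell,k}}(s)}{1+s}\,ds .
\]
The payoff is that $1-F_{\gamma_{\ell,k}}(s)$ is the \emph{finite} sum $\exp(-\tfrac{\theta s}{1-\eta s})\sum_{i=0}^{M-K}\tfrac{1}{i!}(\tfrac{\theta s}{1-\eta s})^{i}$ from \eqref{25}, so the integrand is elementary rather than an infinite series.

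The key manoeuvre is the substitution $u=\tfrac{s}{1-\eta s}$, which maps $[0,1/\eta)$ bijectively onto $[0,\infty)$ and rationalises the exponent: one checks $\tfrac{\theta s}{1-\eta s}=\theta u$, $1+s=\tfrac{1+(1+\eta)u}{1+\eta u}$, and $\tfrac{ds}{1+s}=\tfrac{du}{(1+(1+\eta)u)(1+\eta u)}$. The complementary CDF becomes a pure gamma-type weight $e^{-\tilde\theta u}\sum_{i=0}^{M-K}(\tilde\theta u)^{i}/i!$, where $\tilde\theta=\hat{\beta}_{\ell k}\theta/\beta_{\ell\ell,k}^{2}$ is the exponential rate carried by the Wishart scale of \eqref{20}. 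A partial-fraction split $\tfrac{1}{(1+(1+\eta)u)(1+\eta u)}=\tfrac{1+\eta}{1+(1+\eta)u}-\tfrac{\eta}{1+\eta u}$ separates the two interference scales; after rescaling $t=\tilde\theta u$ the two resulting simple poles sit exactly at $t=-1/a$ and $t=-1/b$, because $\tfrac{1+\eta}{\tilde\theta}=\tfrac{\sum_{i=1}^{L}\beta_{i\ell,k}^{2}}{\hat\beta_{\ell k}(\sum_i\alpha_{i\ell}+1/P_{\mathrm u})}=a$ and $\tfrac{\eta}{\tilde\theta}=b$. Everything then reduces to the single family $K_i(c)\triangleq\int_0^\infty \tfrac{t^{i}e^{-t}}{t+c}\,dt$ evaluated at $c=1/a$ and $c=1/b$.

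It remains to evaluate $K_i(c)$ in closed form. The base case is $K_0(c)=-e^{c}\,\mathrm{Ei}(-c)$, obtained by the shift $\tau=t+c$ and the definition of $\mathrm{Ei}$. Writing $t^{i}=t^{i-1}(t+c)-c\,t^{i-1}$ gives the one-step recurrence $K_i(c)=(i-1)!-c\,K_{i-1}(c)$, whose solution is $K_i(c)=\sum_{k=1}^{i}(k-1)!\,(-c)^{i-k}-(-c)^{i}e^{c}\mathrm{Ei}(-c)$. Substituting $c=1/a$ and $c=1/b$, forming the difference $K_i(1/a)-K_i(1/b)$, and assembling $\log_2 e\sum_{i=0}^{M-K}\tfrac{1}{i!}[K_i(1/a)-K_i(1/b)]$ reproduces \eqref{eq:exact} after relabelling the outer index $i=M-K-\mu$; the prefactor $(-1)^{M-K-\mu-1}$ on the $\mathrm{Ei}$ terms is exactly $-(-c)^{i}=(-1)^{i-1}c^{i}$, while the polynomial part matches the inner sum $\sum_{k=1}^{M-K-\mu}(k-1)!((-1/a)^{\,\cdot}-(-1/b)^{\,\cdot})$ term by term.

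I expect the main obstacle to be bookkeeping rather than any single hard estimate. The delicate point is that the two interference scales collapse onto $a$ and $b$ only if the Wishart scale $\hat\beta_{\ell k}/\beta_{\ell\ell,k}^{2}$ from \eqref{20} is carried consistently, so that the effective exponential rate in $u$ is $\tilde\theta$ rather than $\theta$ alone; otherwise the poles land at $(1+\eta)/\theta$, $\eta/\theta$ and the $\mathrm{Ei}$ arguments come out wrong. The second thing to verify carefully is that the alternating signs survive the recurrence unscathed, and that the integration by parts is legitimate, i.e. both boundary evaluations genuinely vanish and $K_0(c)$ is finite for $c>0$; the latter holds because $a,b>0$ under the standing assumption $L>1$, which guarantees $\sum_{i\neq\ell}^{L}\beta_{i\ell,k}^{2}>0$.
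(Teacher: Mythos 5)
Your proof is correct and reproduces \eqref{eq:exact} exactly, but it takes a genuinely different route from the paper's. The paper never touches the distribution of $\gamma_{\ell,k}$ itself: it writes $R_{\ell,k}=\mathbb{E}_X\{\log_2(1+\gamma_{\ell,k})\}$ as an integral against the gamma density $p_X(x)$ in \eqref{20}, uses the identity $1+\gamma_{\ell,k}=\frac{\theta+(1+\eta)X}{\theta+\eta X}$ to split the integrand into $\log_2(1+a\hat{x})-\log_2(1+b\hat{x})$ --- that is where $a$ and $b$ arise there --- and then evaluates each piece by citing the table integral~\cite[Eq.~(4.337.5)]{GR:07:Book}. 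You instead average over the SINR law, integrate by parts against the complementary CDF, and pull the integral back to the $X$-domain via $u=s/(1-\eta s)$ (note that $u=X/\theta$, so your substitution exactly undoes the map $X\mapsto\gamma_{\ell,k}$); the $a$/$b$ separation then comes from partial fractions rather than from splitting a logarithm, and your recurrence $K_i(c)=(i-1)!-c\,K_{i-1}(c)$ with base case $K_0(c)=-e^{c}\mathrm{Ei}(-c)$ is in effect a from-scratch proof of the very table identity the paper invokes. The two decompositions are dual under integration by parts, so both land on the same closed form; what yours buys is self-containedness (no table lookup) and an explicit verification that the boundary terms vanish, at the cost of more bookkeeping. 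Your caution about the scale $\tilde\theta=\hat\beta_{\ell k}\theta/\beta_{\ell\ell,k}^{2}$ is also well placed: the closed-form CDF stated in \eqref{25} and Proposition~\ref{Proposition1} silently drops the Wishart scale $\hat\beta_{\ell k}/\beta_{\ell\ell,k}^{2}$ that appears in the integration limit one line earlier, and anyone starting from that stated CDF without restoring the scale --- as you do through $\tilde\theta$ --- would end up with $\mathrm{Ei}$ arguments $-\theta/(1+\eta)$ and $-\theta/\eta$ instead of $-1/a$ and $-1/b$. The paper's own proof sidesteps this trap only because it integrates against $p_X$ directly; your version makes the fix explicit.
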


	
\lbn	
\begin{proof}
\begin{equation}
\begin{aligned}
 &R_{\ell,k} = \mathbb{E}_{X}\left\{\log_{2}\left(1+\mathbf{\gamma}_{\ell,k}\right)\right\} \\
	&=\int_{0}^{\infty}\log_{2}\left(1+\mathbf{\gamma}_{\ell,k}(x)\right)p_{X}(x)dx\\
	&=\frac{\hat{\beta}_{\ell k}}{(M-K)!\beta_{\ell\ell,k}^{2}}\times\left\{\int_{0}^{\infty}\log_{2}\left(1+a.\hat{x}\right)\hat{x}^{M-K}e^{-\hat{x}}dx \right.\\
	&\left.-\int_{0}^{\infty}\log_{2}\left(1+b.\hat{x}\right)\hat{x}^{M-K}e^{-\hat{x}}dx\right\} \\
	&=\frac{\log_{2}e}{(M-K)!}\times\left\{\int_{0}^{\infty}\ln\left(1+a.\hat{x}\right)\hat{x}^{M-K}e^{-\hat{x}}d\hat{x}\right. \\
	&\left.-\int_{0}^{\infty}\ln\left(1+b.\hat{x}\right)\hat{x}^{M-K}e^{-\hat{x}}d\hat{x}\right\},
\end{aligned}
\label{29}
\end{equation} 
where $\hat{x}=\frac{\hat{\beta}_{\ell k}}{\beta_{\ell\ell,k}^{2}}x$. Using~\cite[Eq.~(4.337.5)]{GR:07:Book}, \eqref{eq:exact} can be derived. 
\end{proof}

%
 %
\subsection{Outage Probability}

The outage probability $P_{\mathrm{out}}$ is defined as the probability for which the instantaneous SINR falls below a given threshold, $\mathbf{\gamma}_{\mathrm{th}}$. It is easy to compute $P_{\mathrm{out}}$ from \eqref{25} and \eqref{26} as
\begin{equation}
\textit{P}_{\mathrm{out}} = \left\{ \begin{array}{rcl}
 1-e^{-\frac{\theta \mathbf{\gamma}_{\mathrm{th}}}{1-\eta \mathbf{\gamma}_{\mathrm{th}}}}\sum_{i=0}^{M-K}\frac{1}{i!}\left(\frac{\theta \mathbf{\gamma}_{\mathrm{th}}}{1-\eta \mathbf{\gamma}_{\mathrm{th}}}\right)^{i}, 
& \mathbf{\gamma}_{\mathrm{th}}<1/\eta \\ 
1, & \mathbf{\gamma}_{\mathrm{th}}\geq1/\eta
\end{array}\right.
\label{32}
\end{equation} 
\noindent where $\theta$ and $\eta$ are defined as in Proposition 1.

\subsection{Uplink SER}

From the SINR distribution in \eqref{22}, we can evaluate the SER of the system. We see that it is difficult to evaluate the SER directly from the PDF. Therefore, we take the moment generating function (MGF) approach \cite{ SA:00:Book} to derive a closed-form expression for the SER. We consider only the $\mathcal{M}$-QAM ($\mathcal{M}$ = $2^{i}$ with $i$ even).

\lbn
\begin{proposition}
\label{Proposition4}
The average SER for the transmission from the $k$-th user in $\ell$-th cell to the BS is given by

\begin{equation}
\begin{aligned}
\mathrm{SER}_{\ell,k}=&\frac{4}{\pi}\left(1-\frac{1}{\sqrt{\mathcal{M}}}\right)\left[\int_{0}^{\pi/2}\Phi_{\mathbf{\gamma}_{\ell,k}}\left(\frac{\mathsf{g}_{\mathsf{\mathcal{M}QAM}}}{\sin^{2}\phi}\right)d\phi \right.\\
&-\left.\left(1-\frac{1}{\sqrt{\mathcal{M}}}\right)\int_{0}^{\pi/4}\Phi_{\mathbf{\gamma}_{\ell,k}}\left(\frac{\mathsf{g}_{\mathsf{\mathcal{M}QAM}}}{\sin^{2}\phi}\right)d\phi\right],
\end{aligned}
\label{eq:ser}
\end{equation}
\noindent where $\mathsf{g}_{\mathsf{\mathcal{M}QAM}}$$\triangleq$ $\frac{3}{2(\mathcal{M}-1)}$ and $\Phi_{\mathbf{\gamma}_{\ell,k}}$(s) denotes the MGF of $\mathbf{\gamma}_{\ell,k}$, defined as
\begin{equation}
\begin{aligned}
\Phi_{\mathbf{\gamma}_{\ell,k}}(s)&=\mathbb{E}_{\mathbf{\gamma}_{\ell,k}}\left\{e^{-\mathbf{\gamma}_{\ell,k}s}\right\}\\
&=\sum_{p=0}^{M-K+1}\left(_{\quad p}^{M-K+1}\right)\left(\frac{-\beta_{\ell\ell,k}^{2}s}{\beta_{\ell\ell,k}^{2}s+\hat\beta_{\ell k}\theta}\right)^{p}\\
&\times\tensor[_2]{F}{_0}\left(M-K+1,p;-;\frac{-\kappa\hat\beta_{\ell k}}{\hat\beta_{\ell k}\theta+\beta_{\ell\ell,k}^2}s\right), 
\end{aligned}
\label{eq:mgf}
\end{equation}
where $\kappa\triangleq\frac{\tau_{\mathrm{u}}P_{\mathrm{u}}\hat{\beta}_{\ell k}+1}{\tau_{\mathrm{u}}P_{\mathrm{u}}\hat{\beta}_{\ell k}}$, and $\tensor[_2]{F}{_0}\left(\cdot\right)$ represents the generalized hypergeometric function~\cite[Eq.~(9.14.1)]{GR:07:Book}.
\end{proposition}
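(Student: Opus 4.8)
The plan is to split the argument into two logically independent parts. The first establishes the moment generating function (MGF) representation \eqref{eq:ser} of the symbol error rate, which is the standard Simon--Alouini recipe \cite{SA:00:Book} and uses nothing about the channel beyond the definition of $\Phi_{\mathbf{\gamma}_{\ell,k}}$. The second evaluates the MGF itself in closed form \eqref{eq:mgf} starting from the SINR density of Proposition~\ref{Proposition1}; this is where the actual computation lies.

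For the first part I would begin from the exact conditional SER of square $\mathcal{M}$-QAM given the instantaneous SINR, namely $P_{s}(\mathbf{\gamma}_{\ell,k}) = 4\bigl(1-1/\sqrt{\mathcal{M}}\bigr)Q\bigl(\sqrt{2\mathsf{g}_{\mathsf{\mathcal{M}QAM}}\mathbf{\gamma}_{\ell,k}}\bigr) - 4\bigl(1-1/\sqrt{\mathcal{M}}\bigr)^{2}Q^{2}\bigl(\sqrt{2\mathsf{g}_{\mathsf{\mathcal{M}QAM}}\mathbf{\gamma}_{\ell,k}}\bigr)$, where $Q(\cdot)$ is the Gaussian $Q$-function and $2\mathsf{g}_{\mathsf{\mathcal{M}QAM}}=3/(\mathcal{M}-1)$. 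Substituting Craig's single-integral forms $Q(x)=\frac{1}{\pi}\int_{0}^{\pi/2}e^{-x^{2}/(2\sin^{2}\phi)}d\phi$ and $Q^{2}(x)=\frac{1}{\pi}\int_{0}^{\pi/4}e^{-x^{2}/(2\sin^{2}\phi)}d\phi$ and then averaging over $\mathbf{\gamma}_{\ell,k}$, the finite $\phi$-range justifies interchanging expectation and integration, so that each inner expectation collapses to $\mathbb{E}\{e^{-\mathsf{g}_{\mathsf{\mathcal{M}QAM}}\mathbf{\gamma}_{\ell,k}/\sin^{2}\phi}\}=\Phi_{\mathbf{\gamma}_{\ell,k}}(\mathsf{g}_{\mathsf{\mathcal{M}QAM}}/\sin^{2}\phi)$. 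Collecting the common factor $\frac{4}{\pi}(1-1/\sqrt{\mathcal{M}})$ reproduces \eqref{eq:ser} verbatim; this part is essentially mechanical.

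The second part is the substantive one. I would start from $\Phi_{\mathbf{\gamma}_{\ell,k}}(s)=\int_{0}^{1/\eta}e^{-sr}p_{\mathbf{\gamma}_{\ell,k}}(r)\,dr$ with the density \eqref{22}, or equivalently from the single gamma variable $X$ of \eqref{20} via $\mathbf{\gamma}_{\ell,k}=X/(\theta+\eta X)$. The key reduction is the change of variable $y=r/(1-\eta r)$, which maps the finite support $(0,1/\eta)$ onto $(0,\infty)$ and turns \eqref{22} into a $\mathrm{Gamma}(M-K+1,\theta)$ density, giving $\Phi_{\mathbf{\gamma}_{\ell,k}}(s)=\frac{\theta^{M-K+1}}{(M-K)!}\int_{0}^{\infty}y^{M-K}e^{-\theta y}\exp\bigl(-\frac{sy}{1+\eta y}\bigr)\,dy$. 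After rewriting $\frac{sy}{1+\eta y}=\frac{s}{\eta}-\frac{s}{\eta(1+\eta y)}$, the remaining integral is evaluated through the integral representation of the confluent hypergeometric function of the second kind, written in the divergent-series form $\tensor[_2]{F}{_0}$ through \cite[Eq.~(9.14.1)]{GR:07:Book}; the finite sum $\sum_{p=0}^{M-K+1}\binom{M-K+1}{p}(\cdot)^{p}$ emerges from a binomial expansion, and careful collection of the constants $\theta$, $\eta$, $\hat\beta_{\ell k}$, and $\beta_{\ell\ell,k}^{2}$ produces the scale factor $\kappa$ and the exact form \eqref{eq:mgf}.

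I expect the main obstacle to be precisely this last integral. It is non-elementary, its value is a $\tensor[_2]{F}{_0}$ rather than an elementary or even a convergent hypergeometric function, and the bookkeeping required to pin down the binomial index range $p=0,\dots,M-K+1$ together with the exact constant $\kappa$ is delicate. By contrast, the MGF-to-SER averaging of the first part and the reduction to the gamma integral are straightforward once the substitution is chosen.
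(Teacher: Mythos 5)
Your first part is fine: the exact $\mathcal{M}$-QAM conditional SER, Craig's forms of $Q$ and $Q^{2}$, and the interchange of the finite $\phi$-integrals with the expectation is exactly the standard machinery of \cite{SA:00:Book}, and the paper itself does no more than invoke it for \eqref{eq:ser}. The problem is your second part, which is where you yourself locate the substance.

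The paper never computes $\mathbb{E}\left\{e^{-s\gamma_{\ell,k}}\right\}$ from the density \eqref{22}. Its proof writes $\gamma_{\ell,k}=X/Y$ with $Y\triangleq\theta+\eta X$, assigns $Y$ the shifted-Gamma density $p_{Y}(y)=\frac{e^{-(y-\theta)/\kappa}}{(M-K)!\,\kappa}\bigl(\frac{y-\theta}{\kappa}\bigr)^{M-K}$, averages the numerator at fixed $y$ to get $\Phi_{X}(s)=\bigl(\frac{\hat\beta_{\ell k}y}{\hat\beta_{\ell k}y+\beta_{\ell\ell,k}^{2}s}\bigr)^{M-K+1}$, and then sets $\Phi_{\gamma_{\ell,k}}(s)=\int_{0}^{\infty}\Phi_{X}(s)\,p_{Y}(y)\,dy$, following \cite[Eq.~(53)]{NMDL:13:VT} --- that is, the numerator and denominator of the SINR are averaged \emph{as if independent}. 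The finite sum over $p$ in \eqref{eq:mgf} arises precisely because the integrand contains a rational function raised to the finite power $M-K+1$: expanding $\bigl(1-\frac{\beta_{\ell\ell,k}^{2}s}{\hat\beta_{\ell k}y+\beta_{\ell\ell,k}^{2}s}\bigr)^{M-K+1}$ binomially leaves, term by term, a Tricomi-type integral $\int_{0}^{\infty}t^{M-K}e^{-t}\bigl(\hat\beta_{\ell k}\kappa t+\hat\beta_{\ell k}\theta+\beta_{\ell\ell,k}^{2}s\bigr)^{-p}dt$, which is the ${}_2F_0$ factor with the $s$-dependent argument.

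Your route cannot terminate in that formula. Your reduction to $\frac{\theta^{M-K+1}}{(M-K)!}\int_{0}^{\infty}y^{M-K}e^{-\theta y}\exp\bigl(-\frac{sy}{1+\eta y}\bigr)dy$ is correct (it just undoes the change of variables behind \eqref{22}), but after writing $\frac{sy}{1+\eta y}=\frac{s}{\eta}-\frac{s}{\eta(1+\eta y)}$ the integrand contains $\exp\bigl(\frac{s/\eta}{1+\eta y}\bigr)$, an \emph{exponential} of a rational function --- there is no finite power to expand binomially, so the step you describe as producing the sum $\sum_{p=0}^{M-K+1}\binom{M-K+1}{p}(\cdot)^{p}$ has nothing to act on. Expanding the exponential instead gives the infinite series $e^{-s/\eta}\sum_{n\geq0}\frac{(s/\eta)^{n}}{n!}\,{}_2F_0\bigl(M-K+1,n;-;-\eta/\theta\bigr)$, whose structure (infinite sum, ${}_2F_0$ argument independent of $s$, prefactor $e^{-s/\eta}$) is not that of \eqref{eq:mgf}. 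This is not a bookkeeping issue: since $Y=\theta+\eta X$ is perfectly dependent on $X$, the exact MGF and the paper's decoupled average are genuinely different functions, $\mathbb{E}\bigl\{e^{-sX/(\theta+\eta X)}\bigr\}\neq\int_{0}^{\infty}\mathbb{E}_{X}\bigl\{e^{-sX/y}\bigr\}p_{Y}(y)\,dy$ (already at first order in $s$, $\mathbb{E}\bigl\{X/(\theta+\eta X)\bigr\}\neq\mathbb{E}\{X\}\,\mathbb{E}\bigl\{1/(\theta+\eta X)\bigr\}$). So to arrive at \eqref{eq:mgf} as stated you must reproduce the paper's conditioning-plus-independence construction of $p_{Y}$ and $\Phi_{X}$; an honest exact computation from \eqref{22}, which is what you propose, yields a different expression.
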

\lbn

\lbn
\begin{proof}
Let $\mathbf{\gamma}_{\ell,k}\triangleq\frac{X}{Y}$, where $Y\triangleq\theta+\eta X$. From ~\cite[Theorem~(3.19)]{YG:05:Book} and ~\cite[Theorem~(3.21)]{YG:05:Book}, we obtain $p_{Y}(y)=\frac{e^{-(y-\theta)/\kappa}}{(M-K)!\kappa}\left(\frac{y-\theta}{\kappa}\right)^{M-K}$. Correspondingly, the MGF of $\mathbf{\gamma}_{\ell,k}$ is given by
\begin{equation}
\Phi_{\mathbf{\gamma}_{\ell,k}}(s)=\mathbb{E}_{\mathbf{\gamma}_{\ell,k}}\left\{e^{-\mathbf{\gamma}_{\ell,k}s}\right\}=\int_{0}^{\infty}\Phi_{X}(s)p_{Y}(y)dy,
\label{eq:mgf1}
\end{equation}
where
\begin{equation}
\begin{aligned}
\Phi_{X}(s)&=\mathbb{E}_{X}\left\{e^{-\mathbf{\gamma}_{\ell,k} s}\right\}=\int_{0}^{\infty}e^{-\mathbf{\gamma}_{\ell,k} s}p_{X}(x)dx\\
&=\left(\frac{\hat\beta_{\ell k}y}{\hat\beta_{\ell k}y+\beta_{\ell\ell, k}^{2}s}\right)^{M-K+1}.
\label{eq:mgf2}
\end{aligned}
\end{equation}
Substituting \eqref{eq:mgf2} into \eqref{eq:mgf1}, we have
\begin{small}
\begin{equation}
\begin{aligned}
&\Phi_{\mathbf{\gamma}_{\ell,k}}(s)=\\
&\int_{0}^{\infty}\left(\frac{\hat\beta_{\ell k}y}{\hat\beta_{\ell k}y+\beta_{\ell\ell, k}^{2}s}\right)^{M-K+1}\frac{e^{-(y-\theta)/\kappa}}{(M-K)!\kappa}\left(\frac{y-\theta}{\kappa}\right)^{M-K}dy.
\label{eq:mgf3}
\end{aligned}
\end{equation}
\end{small}
In \cite{NMDL:13:VT}, the authors calculated the MGF for the case with a perfect CSI. We can apply the same procedure as in~\cite[Eq.~(53)]{NMDL:13:VT} onto \eqref{eq:mgf3} to obtain \eqref{eq:mgf}.
\end{proof}

\begin{corollary}

An upper bound for the SER of the $k$-th user in the $\ell$-th cell of the uplink transmission, assuming ZF processing occurs in the BS, is given by

\begin{equation}
\begin{aligned}
\mathrm{SER}_{\ell,k}^{\mathrm{UB}}& = \left(\frac{5}{3\sqrt{\mathcal{M}}}-\frac{1}{\mathcal{M}}-\frac{2}{3}\right)\Phi_{\mathbf{\gamma}_{\ell,k}}\left(\mathsf{g}_{\mathsf{\mathcal{M}QAM}}\right)\\
&+\left(1-\frac{1}{\sqrt{\mathcal{M}}}\right)\Phi_{\mathbf{\gamma}_{\ell,k}}\left(\frac{4}{3}\mathsf{g}_{\mathsf{\mathcal{M}QAM}}\right)\\
&+\left(\frac{2}{\sqrt{\mathcal{M}}}-1-\frac{1}{\mathcal{M}}\right)\Phi_{\mathbf{\gamma}_{\ell,k}}\left(2\mathsf{g}_{\mathsf{\mathcal{M}QAM}}\right).
\label{eq:serup}
\end{aligned}
\end{equation}
\end{corollary}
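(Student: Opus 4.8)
The plan is to reduce the statement to the exact symbol error probability already established in Proposition~\ref{Proposition4} and then to replace the Gaussian $Q$-function and its square by tight exponential bounds; this is exactly what converts the trigonometric MGF integrals of \eqref{eq:ser} into a finite sum of point evaluations of $\Phi_{\mathbf{\gamma}_{\ell,k}}(\cdot)$. First I would write the exact SER in its Craig form $\mathrm{SER}_{\ell,k}=4c\,\mathbb{E}\{Q(\sqrt{2g\,\mathbf{\gamma}_{\ell,k}})\}-4c^2\,\mathbb{E}\{Q^2(\sqrt{2g\,\mathbf{\gamma}_{\ell,k}})\}$, where $c\triangleq 1-1/\sqrt{\mathcal{M}}$ and $g\triangleq\mathsf{g}_{\mathsf{\mathcal{M}QAM}}$. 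This is equivalent to \eqref{eq:ser}, since $Q(x)=\frac{1}{\pi}\int_{0}^{\pi/2}e^{-x^{2}/(2\sin^{2}\phi)}\,d\phi$ and $Q^{2}(x)=\frac{1}{\pi}\int_{0}^{\pi/4}e^{-x^{2}/(2\sin^{2}\phi)}\,d\phi$, so that putting $x^{2}=2g\,\mathbf{\gamma}_{\ell,k}$ and taking the expectation reproduces the integrand $\Phi_{\mathbf{\gamma}_{\ell,k}}(g/\sin^{2}\phi)$.

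Next I would invoke the Chiani--Dardari--Simon exponential bound $Q(x)\le \tfrac{1}{12}e^{-x^{2}/2}+\tfrac14 e^{-2x^{2}/3}$ together with $Q^{2}(x)\le \tfrac14\bigl(e^{-x^{2}/2}+e^{-x^{2}}\bigr)$, both valid for $x\ge 0$ (the latter following from $Q^{2}(x)\le\tfrac14 e^{-x^{2}}$). Substituting $x^{2}=2g\,\mathbf{\gamma}_{\ell,k}$ produces only the three exponentials $e^{-g\mathbf{\gamma}_{\ell,k}}$, $e^{-\frac43 g\mathbf{\gamma}_{\ell,k}}$ and $e^{-2g\mathbf{\gamma}_{\ell,k}}$. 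Because each bound is a finite, nonnegative combination of exponentials, I can interchange the expectation with the sum and recognise every term through the defining identity $\mathbb{E}\{e^{-s\mathbf{\gamma}_{\ell,k}}\}=\Phi_{\mathbf{\gamma}_{\ell,k}}(s)$, with $\Phi_{\mathbf{\gamma}_{\ell,k}}$ available in closed form from \eqref{eq:mgf}. Thus $\mathbb{E}\{Q(\sqrt{2g\,\mathbf{\gamma}_{\ell,k}})\}$ is replaced by $\tfrac{1}{12}\Phi_{\mathbf{\gamma}_{\ell,k}}(g)+\tfrac14\Phi_{\mathbf{\gamma}_{\ell,k}}(\tfrac43 g)$ and $\mathbb{E}\{Q^{2}(\sqrt{2g\,\mathbf{\gamma}_{\ell,k}})\}$ by $\tfrac14\Phi_{\mathbf{\gamma}_{\ell,k}}(g)+\tfrac14\Phi_{\mathbf{\gamma}_{\ell,k}}(2g)$.

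Finally I would insert these into $4c\,\mathbb{E}\{Q\}-4c^{2}\,\mathbb{E}\{Q^{2}\}$ and collect like MGF terms. The coefficient of $\Phi_{\mathbf{\gamma}_{\ell,k}}(\tfrac43 g)$ is $4c\cdot\tfrac14=c=1-1/\sqrt{\mathcal{M}}$; the coefficient of $\Phi_{\mathbf{\gamma}_{\ell,k}}(2g)$ is $-4c^{2}\cdot\tfrac14=-c^{2}=\tfrac{2}{\sqrt{\mathcal{M}}}-1-\tfrac{1}{\mathcal{M}}$; and the coefficient of $\Phi_{\mathbf{\gamma}_{\ell,k}}(g)$ is $4c\cdot\tfrac{1}{12}-4c^{2}\cdot\tfrac14=\tfrac{c}{3}-c^{2}=\tfrac{5}{3\sqrt{\mathcal{M}}}-\tfrac{1}{\mathcal{M}}-\tfrac23$, which is precisely \eqref{eq:serup}.

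The subtle point, and the main obstacle, is the subtracted $Q^{2}$ term. The positive $Q$-term is genuinely over-estimated by its Chiani bound, but since $Q^{2}$ enters with a minus sign, merely upper-bounding $\mathbb{E}\{Q^{2}\}$ does not by itself preserve the upper-bound direction: a fully rigorous argument would require either a lower bound on $\mathbb{E}\{Q^{2}\}$ of the same exponential form (for which no clean expression is available) or a quantitative argument that the slack in the $Q$-bound dominates the slack introduced in the $Q^{2}$-bound over the SINR range of interest. I would therefore argue the inequality at the level of the integrands in $\phi$ (using $\sin^{2}\phi\le\tfrac12$ on $[0,\pi/4]$ to control the $Q^{2}$ integral) and verify that the resulting negative coefficient of $\Phi_{\mathbf{\gamma}_{\ell,k}}(g)$ still keeps the closed form above the true SER, corroborating the bound numerically. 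The remaining steps---the interchange of expectation and summation and the closed-form evaluation of each $\Phi_{\mathbf{\gamma}_{\ell,k}}(\cdot)$ via \eqref{eq:mgf}---are routine.
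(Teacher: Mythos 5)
Your proposal is correct and follows essentially the same route as the paper's own proof: the paper likewise starts from the exact MGF-form SER of Proposition~\ref{Proposition4}, applies the Chiani--Dardari--Simon exponential bounds to the $[0,\pi/2]$ and $[0,\pi/4]$ integrals (equivalently $Q(x)\le\frac{1}{12}e^{-x^{2}/2}+\frac{1}{4}e^{-2x^{2}/3}$ and $Q^{2}(x)\le\frac{1}{4}e^{-x^{2}/2}+\frac{1}{4}e^{-x^{2}}$), and collects the three MGF terms with exactly the coefficients you compute. The directional subtlety you flag for the subtracted $Q^{2}$ term is genuine, but it is equally unresolved in the paper's derivation, which simply applies the upper bound to the second integral (writing the first relation as an approximation) and relies on the numerical tightness shown in Fig.~\ref{SER}; your explicit acknowledgment of this gap is, if anything, more careful than the original.
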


\begin{proof}
We derive a new approximation to avoid finite integration over $\phi$. We start by using a similar methodology as in \cite{CDS:03:WCOM} given as
\begin{equation}
\begin{aligned}
&\mathrm{SER}_{\ell,k}=\\
&\frac{4}{\pi}\left(1-\frac{1}{\sqrt{\mathcal{M}}}\right)
\times\mathbb{E}_{\mathbf{\gamma}_{\ell,k}}\left\{\int_{0}^{\pi/2}\Phi_{\mathbf{\gamma}_{\ell,k}}\left(\frac{\mathsf{g}_{\mathsf{\mathcal{M}QAM}}\mathbf{\gamma}_{\ell,k}}{\sin^{2}\phi}\right)d\phi\right. \\
&\left.-\left(1-\frac{1}{\sqrt{\mathcal{M}}}\right)\int_{0}^{\pi/4}\Phi_{\mathbf{\gamma}_{\ell,k}}\left(\frac{\mathsf{g}_{\mathsf{\mathcal{M}QAM}}\mathbf{\gamma}_{\ell,k}}{\sin^{2}\phi}\right)d\phi\right\},
\label{eq:proofup}
\end{aligned}
\end{equation}
where

\begin{equation}
\begin{aligned}
&\frac{2}{\pi}\int_{0}^{\pi/2}\exp\left(\frac{\mathsf{g}_{\mathsf{\mathcal{M}QAM}}\mathbf{\gamma}_{\ell,k}}{\sin^{2}\phi}\right)d\phi\\
&\approx \frac{1}{6}e^{\mathsf{g}_{\mathsf{\mathcal{M}QAM}}\mathbf{\gamma}_{\ell,k}}+\frac{1}{2}e^{\frac{4\mathsf{g}_{\mathsf{\mathcal{M}QAM}}\mathbf{\gamma}_{\ell,k}}{3}}.
\label{eq:proofup1}
\end{aligned}
\end{equation}
\begin{equation}
\begin{aligned}
&\frac{2}{\pi}\int_{0}^{\pi/4}\exp\left(\frac{\mathsf{g}_{\mathsf{\mathcal{M}QAM}}\mathbf{\gamma}_{\ell,k}}{\sin^{2}\phi}\right)d\phi \\
&\leq \frac{1}{2}e^{\mathsf{g}_{\mathsf{\mathcal{M}QAM}}\mathbf{\gamma}_{\ell,k}}+\frac{1}{2}e^{2\mathsf{g}_{\mathsf{\mathcal{M}QAM}}\mathbf{\gamma}_{\ell,k}}.
\label{eq:proofup2}
\end{aligned}
\end{equation}
Substituting \eqref{eq:proofup1} and \eqref{eq:proofup2} into \eqref{eq:proofup}, we obtain \eqref{eq:serup}.
\end{proof}

\section{Numerical Results}
\label{Simulation}

In this section, we provide numerical results that can be used to verify the above analysis. We first consider a simple scenario where the large-scale fading (path loss and shadow fading) is fixed and $L = 7$ cells share the same time-frequency. We set the number of users and the number of pilot symbols to $K = \tau_{\mathrm{u}} = 10$, and we set the length of the coherence interval to $T = 196$ \cite{MARZ:10:WCOM,HLM:13:COM}. We also assume that all of the desired links are associated with the gain $\beta_{\ell\ell,k} = 1$, and that all the interfering links for the intercell interference are associated with $\beta_{i\ell,k} \triangleq \beta$, $\forall i \neq \ell, k = 1,2,\cdots, K$, which is referred to as the cross gain factor. We define SNR $\triangleq P_{\mathrm{u}}$ since the noise variance is normalized to unity, and  the average transmit power of the data and pilot signals are assumed to be equal. Furthermore, we define the spectral efficiency per cell in bits/s/Hz for the $\ell$-th cell as
\begin{equation}
\mathrm{S} \triangleq \left(\frac{T-\tau_{\mathrm{u}}}{T}\right)\sum_{k=1}^{K} R_{\ell,k},
\label{33}
\end{equation} 
where $R_{\ell,k}$ is given in \eqref{eq:exact}.
\begin{figure}[t]
\centering
\includegraphics[width=0.45\textwidth]{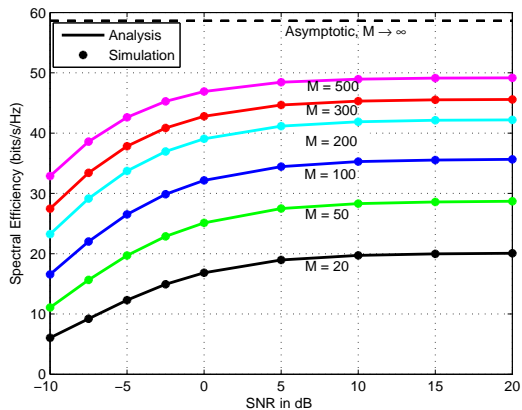}
\caption{Spectral efficiency versus the SNR for $M = 20, 50, 100, 200, 300, 500, \mbox{and } \infty$ ($L = 7$, $K = 10$, $\beta = 0.05$).}
\label{sumrateM}
\end{figure} 

\begin{figure}[t]
\centering
\includegraphics[width=0.5\textwidth]{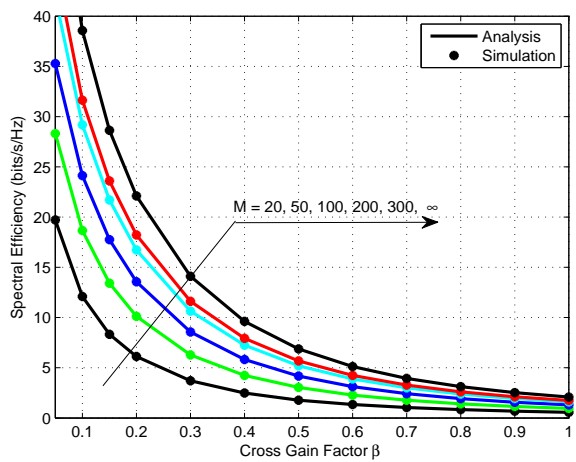}
\caption{Spectral efficiency versus the cross gain factor $\beta$ for $M = 20, 50, 100, 200, 300, \mbox{and } \infty$ ($L = 7$, $K = 10$, SNR = 10dB).}
\label{sumratealpha}
\end{figure}

\begin{figure}[t]
\centering
\includegraphics[width=0.5\textwidth]{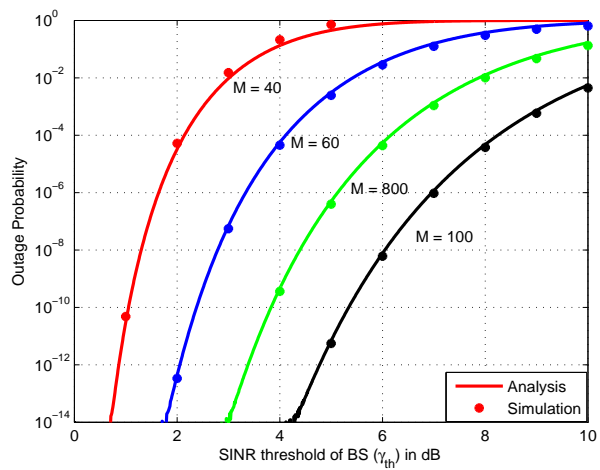}
\caption{Outage probability $P_{\mathrm{out}}$ versus the SINR threshold $\mathbf{\gamma}_{\mathrm{th}}$ for $M = 40, 60, 80, \mbox{and } 100$ ($L = 7$, $K = 10$, $\beta = 0.05$, SNR = 10dB).}
\label{outage2}
\end{figure}

\begin{figure}[t]
\centering
\includegraphics[width=0.5\textwidth]{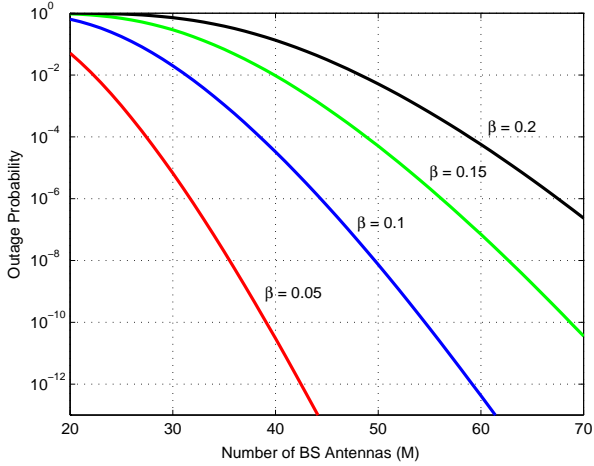}
\caption{Outage probability $P_{\mathrm{out}}$ versus the number of antennas $M$ for $\beta = 0.05, 0.1, 0.15, \mbox{and } 0.2$ ($L = 7$, $K = 10$, SNR = 10dB, $\mathbf{\gamma}_{\mathrm{th}} = 1\mbox{dB}$).}
\label{outage_M}
\end{figure}

\begin{figure}[t]
\centering
\includegraphics[width=0.5\textwidth]{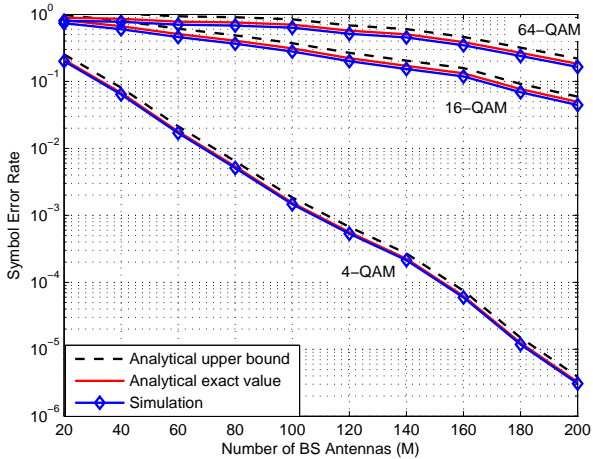}
\caption{SER versus the number of antennas $M$ for $\mathcal{M} = 4, 16, \mbox{and } 64$ ($L = 7$, $K = 10$, SNR = 10dB, $\beta = 0.1$).}
\label{SER}
\end{figure}
			
Fig. \ref{sumrateM} shows the spectral efficiency per cell versus the SNR for the cross gain factor $\beta = 0.05$ and various values of $M$. The simulation results were obtained through Monte-Carlo simulations using \eqref{15},\footnote{Note that $\alpha_{i\ell}$ and $\hat{\mathbf{G}}_{\ell\ell}$ in \eqref{15} are estimated by $\alpha_{i\ell}=\mathrm{tr}\bigl(\mathbb{E}[\xi_{i\ell}\xi_{i\ell}^{\dag}]\bigr)$ and  \eqref{eq:Estimation_H}, respectively.} and the analytical results are computed from \eqref{eq:exact}. As was expected, the spectral efficiency increases as $M$ increases. At a low SNR, the spectral efficiency increases steeply with an increase in SNR; at a high SNR, the spectral efficiency reaches a saturated value. This implies that increasing the transmit power for each user does not improve the system performance, but we need to increase the number of BS antennas instead. When $M$ grows without bound, we have an asymptotic curve resulting from the imperfect CSI due to the pilot contamination. We can observe how the spectral efficiency for $M \rightarrow \infty$ is independent of SNR, which validates our analysis in \eqref{13}. Finally, the simulation results and the analytical results are found to be in perfect agreement.
     
The effect of the cross gain factor for different values of $M$ is depicted in Fig. \ref{sumratealpha}. We also see how the spectral efficiency increases as $M$ increases. Note that each element of the channel estimation error $\mathbf{\xi}_{\ell\ell,k}$ follows a distribution with a zero mean and variance of $\beta_{\ell\ell,k}-\tau_{\mathrm{u}}P_{\mathrm{u}}\beta_{\ell\ell,k}^2\left(\tau_{\mathrm{u}}P_{\mathrm{u}}\sum_{j=1}^L\beta_{j\ell,k}+1\right)^{-1}$. When the cross gain factor increases from $0$ to $1$, or correspondingly the variance of the channel estimation error increases from $0.01$ to $0.85$, the spectral efficiency decreases significantly. In particular, when the cross gain factor is almost equal to the gain of the desired link, the spectral efficiency becomes very low regardless of the number of antennas $M$.
		
Fig. \ref{outage2} shows the outage probability per user versus the SINR threshold $\mathbf{\gamma}_{\mathrm{th}}$, when the cross gain factor $\beta = 0.05$ and  $M = 40, 60, 80, \mbox{and } 100$. The analytical results are computed using \eqref{32}. When $M$ increases, the outage probability decreases significantly.	Fig. \ref{outage_M} shows the outage per user versus $M$ for several values of $\beta$. We can see that the outage probability increases significantly as $\beta$ increases.
		
		%
		
Fig. \ref{SER} presents the SER against the number of BS antennas per user for different $\mathcal{M} = 4, 16, 64$. The exact analytical curves and the upper bound of the analytical curves are computed using Proposition 4 and Corollary 3, respectively. We can see that the analytical results are accurate for all cases. The upper bound of the SER is also quite tight, so we can use it to evaluate the performance of system for simple cases. In terms of the SER, this figure shows the advantages of using a massive number of antennas. The performance is seen to improve substantially as $M$ increases.
		
We now evaluate the effect of the imperfect CSI on the system performance in more practical scenarios. We consider hexagonal cells with a radius of $1,000$ m. The number of users and the number of pilot symbols are set to $K = \tau_{\mathrm{u}} = 10$. The large-scale fading coefficients are chosen as
\begin{equation}
\beta_{i\ell,k} = \frac{z_{i\ell,k}}{r_{i\ell,k}^{\gamma}},
\label{eq:large-scale}
\end{equation}
where $z_{i\ell,k}$ is a log-normal random variable with a standard deviation of 8$\mathrm{dB}$. $r_{i\ell,k}$ is the distance between the user and the BS, and $\gamma$ is the path loss exponent, which is assumed to be equal to $4$. The users are assumed to be located randomly and uniformly over the cell area. We consider the channel estimation error in cell 1 ($\ell=1$), and we use a normalized channel estimation error to evaluate the system performance for the MMSE estimation as
\begin{equation}
\mathsf{err} \triangleq  10\log_{10}\left(\frac{\left\|{\mathbf{G}}_{11}-\hat{\mathbf{G}}_{11}\right\|_F^2}{\left\|\mathbf{G}_{11}\right\|_F^2}\right).
\label{err}
\end{equation} 

\begin{figure}[t]
\centering
\includegraphics[width=0.5\textwidth]{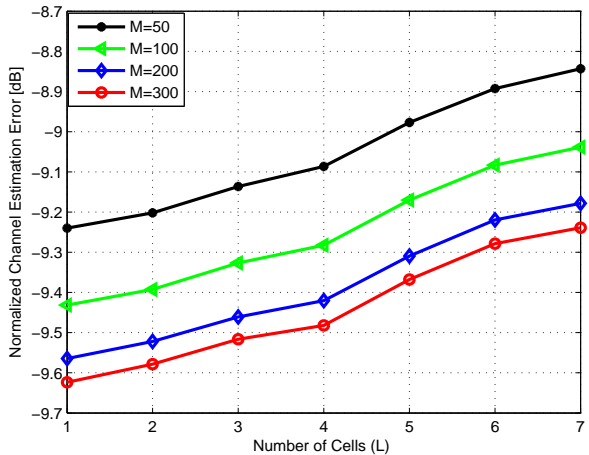}
\caption{Normalized channel estimation error versus the number of cells $L$ for $M = 50, 100, 200, \mbox{and } 300$ ($K = \tau_\mathrm{u} = 10$, SNR = 10dB).}
\label{nerr}
\end{figure}

\begin{figure}[t]
\centering
\includegraphics[width=0.5\textwidth]{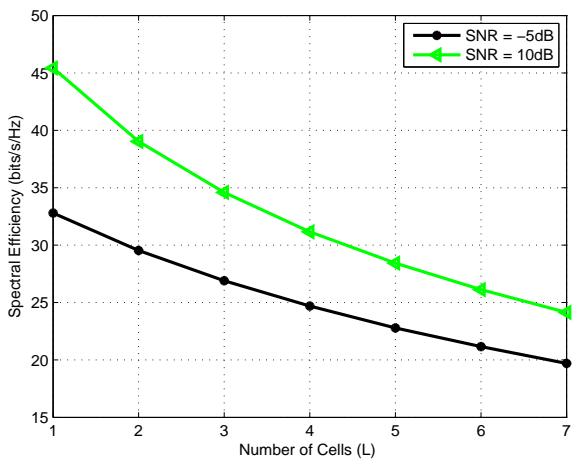}%
\caption{Spectral efficiency versus the number of cells $L$ for SNR = -5dB and 10dB ($K = \tau_\mathrm{u} = 10$, $M = 100$).}
\label{capacity_L}
\end{figure}

Fig. \ref{nerr} and \ref{capacity_L} illustrate the effect of having imperfect CSI against the number of cells $L$. For both cases, when the number of cells increases, the normalized channel estimation error in \eqref{err} increases and the spectral efficiency in \eqref{33} significantly decreases. In particular, when $L = 1$, which corresponds to an interference-free scenario, the performance for both cases is the best that can be obtained, but it degrades as $L$ increases, due to pilot contamination.

\section{Conclusion}
\label{Conclusion}

In this paper, we have analyzed the uplink performance of a multicell massive MIMO system where the BS is implemented with a ZF receiver with imperfect CSI due to pilot contamination. We have derived the exact analytical expressions for the PDF of the uplink SINR as well as the corresponding achievable rate, outage probability, and symbol error rate. Numerical results have been provided to validate the analysis, and the results indicate that in order to improve the system performance, we need to increase the number of BS antennas rather than the transmit power of the users. In addition, the effect of the pilot contamination on the system performance was investigated. The results presented in this paper have been provided as closed-form expressions, thus enabling easy and clear assessment of the system performance.


\begin{thebibliography}{}
%
%

\bibitem{GJJV:03:JSAC}
Goldsmith, A.,  Jafar, S. A.,  Jindal, N., $\&$  Vishwanath, S. (2003). Capacity limits of MIMO channels. \textsl{IEEE Journal on Selected Areas in Communications, 21}(5), 684--702.

\bibitem{GKHCS:07:MAG}
Gesbert, D.,  Kountouris, M.,  Heath, R. W.,  Chae, C.-B., $\&$  Salzer, T. (2007). Shifting the MIMO paradigm. \textsl{IEEE Communications Magazine, 24}(5), 36--46.

\bibitem{AWH:07:WCOM}
Andrews, J. G., Choi, W., $\&$  Heath, R. W. (2007). Overcoming interference in spatial multiplexing MIMO cellular networks. \textsl{IEEE Transactions on Wireless Communications, 14}(6), 95--104.

\bibitem{HLM:13:COM}
Ngo, H. Q.,  Larsson, E. G., $\&$  Marzetta, T. L. (2013). Energy and spectral efficiency of very large multiuser MIMO systems. \textsl{IEEE Transactions on Communications, 61}(4), 1436--1449.
\bibitem{HBD:13:JSAC}
Hoydis, J.,  Brink, S. T., $\&$  Debbah, M. (2013). Massive MIMO in the UL/DL of cellular networks: how many antennas do we need?. \textsl{IEEE Journal on Selected Areas in Communications, 31}(2), 160--171.
\bibitem{YM:13:JSAC}
Yang, H. $\&$  Marzetta, T. L. (2013). Performance of conjugate and zero-forcing beamforming in large-scale antenna systems. \textsl{IEEE Journal on Selected Areas in Communications, 31}(2), 172--179.
\bibitem{Hanif:14:TCOM}
Hanif, M.-F., Tran, L.-N. , T$\mathsf{\ddot{o}}$lli, A., $\&$ Juntti, M. (2014). Computationally efficient robust beamforming for SINR balancing in multicell downlink with applications to large antenna array systems.  \textsl{ IEEE Transactions on Communications, 62}(6),  1908--1920.

\bibitem{MARZ:10:WCOM}
Marzetta, T. L. (2010). Noncooperative cellular wireless with unlimited numbers of base station antennas. \textsl{IEEE Transactions on Wireless Communications, 9}(11), 3590--3600.

\bibitem{HCPR:12:WCOM}
Huh, H.,  Caire, G.,  Papadopoulos, H. C. $\&$  Ramprashad, S. A. (2012). Achieving massive MIMO spectral efficiency with a not-so-large number
of antennas. \textsl{IEEE Transactions on Wireless Communication, 11}(9), 3226--3239.

\bibitem{OLT:13:JSAC}
Ozgur, A.,  Leveque, O. $\&$  Tse, D. (2013). Spatial degrees of freedom of large distributed MIMO systems and wireless ad hoc networks. \textsl{IEEE Journal on Selected Areas in Communications, 31}(2), 202--2014.

\bibitem{PML:12:COML}
Pitarokoilis, A.,  Mohammed, S. K. $\&$  Larsson, E. G. (2012). On the optimality of single-carrier transmission in large-scale antenna systems.
\textsl{IEEE Communications Letters, 1}(4), 276--279.
\bibitem{JAT:11:WCOM}
Jose, J., Ashikhmin, A., $\&$  Marzetta, T. L. (2011). Pilot contamination and precoding in multi-cell TDD systems. \textsl{IEEE Transactions on Wireless Communications, 10}(8), 2640--2651.




\bibitem{Dai}
Dai, H.,  Molisch, A. F., $\&$  Poor, H. V. (2004). Downlink capacity of interference-limited MIMO systems with joint detection. \textsl{IEEE Transactions on Wireless Communication, 3}(2), 798--801.

\bibitem{Tran}
Tran, L. N., Juntti, M., Bengtsson, M., $\&$ Ottersten, B. (2013). Weighted sum rate maximization for MIMO broadcast channels using dirty paper coding and zero-forcing methods. \textsl{ IEEE Transactions on Communications, 61}(6), 2362--2373.
\bibitem{HC:70:Book}
Cramer, H. (1970). Random Variables and Probability Distributions. Cambridge Univ. Press.


\bibitem{NMDL:13:VT}
Ngo, H. Q.,  Matthaiou, M.,  Duong, T. Q., $\&$  Larsson, E. G. (2013). Uplink performance analysis of multicell MU-SIMO systems with ZF
receivers.  \textsl{IEEE Transactions on Vehicular Technology, 62}(9), 4471--4483.

\bibitem{YGFL:13:JSAC}
Yin, H.,  Gesbert, D.,  Filippou, M. $\&$  Liu, Y. (2013). A coordinated approach to channel estimation in large-scale multiple-antenna systems.
\textsl{IEEE Journal on Selected Areas in Communications, 31}(2), 264--273.

\bibitem{Fehske}
Fehske, A., Fettweis, G., Malmodin, J. $\&$ Biczok, G. (2011). The global footprint of mobile communications: the ecological and economic perspective. \textsl{IEEE Communications Magazine, 49}(8),  55--62.

\bibitem{Gonzalez}
Brevis, P. G.,  Gondzio, J.,  Fan, Y.,  Poor, H. V.,  Thompson, J.,  Krikidis, I., $\&$  Chung, P. (2011). Base station location optimization for minimal energy consumption in wireless networks. in \textsl{Proceedings of the 2nd Green Wireless Communications $\&$ Networks Workshop (GreeNet 2011)}, Budapest, Hungary, 15--18.




\bibitem{LTEM:14:MAG}
Larsson, E. G.,  Tufvesson, F.,  Edfors, O., $\&$  Marzetta, T. L. (2014).  Massive MIMO for next generation wireless systems. \textsl{IEEE Communications Magazine, 52}(2), 186--195.

\bibitem{RPLLM:13:MAG}
Rusek, F.,  Persson, D.,  Lau, B. K.,   Larsson, E. G.,  Marzetta, T. L.,  Edfors, O., $\&$  Tufvesson, F. (2013). Scaling up MIMO: Opportunities and
challenges with very large arrays. \textsl{IEEE Communication Magazine, 30}(1), 40--46.

\bibitem{NGO:13:COM}
Ngo, H. Q.,  Larsson, E. G., $\&$  Marzetta, T. L. (2013). The multicell multiuser MIMO uplink with very large antenna and a finite-dimensional
channel. \textsl{IEEE Transactions on Communications, 61}(6), 2350--2361.



\bibitem{Tse}
Tse, D. N. C. $\&$  Viswanath, P. (2005). Fundamentals of Wireless Communications. Cambridge, U.K.: Cambridge Univ. Press.


\bibitem{GJ:11:SPAWC}
Gopalakrishnan, B., $\&$  Jindal, N. (2011). An analysis of pilot contamination on multi-user MIMO cellular systems with many antennas. in
\textsl{Proceedings of the IEEE International Workshop Signal Process. Advances in Wireless Communications (SPAWC 2011)}, San Francisco, CA USA.
 381--385.



\bibitem{KAY:93:Book}
Kay, S. M. (1993). Fundamentals of Statistical Signal Processing: Estimation Theory. Englewood Cliffs, NJ: Prentice Hall.
\bibitem{Tulino}
Tulino, A. M. $\&$  Verdu, S. (2004). Random matrix theory and wireless communications. \textsl{Foundations and Trends in Communication and Information Theory, 1}(1), 1--182.
\bibitem{GHP:02:COML}
Gore, D. A.,  Heath, R. W. $\&$  Paulraj, A. J. (2002). Transmit selection in spatial multiplexing systems. \textsl{IEEE Communications Letters, 6}(11), 1491--493.

\bibitem{GR:07:Book}
Gradshteyn, I. S., $\&$  Ryzhik, I. M. (2007). Table of Integrals, Series, and Products. 7th ed. San Diego, CA, USA: Academic Press.


\bibitem{SA:00:Book}
Simon, M. K., $\&$  Alouini, M.-S. (2000). Digital Communication over Fading Channels: a Unified Approach to Performance Analysis. John
Wiley \& Sons, New York.

\bibitem{YG:05:Book}
Yates, R. D. $\&$  Goodman, D. J. (2005). Probability and Stochastic Processes, 2nd ed.

\bibitem{CDS:03:WCOM}
Chiani, M.,  Dardari, D., $\&$  Simon, M. K. (2003). New exponential bounds and approximations for the computation of error probability in fading
channels. \textsl{IEEE Transactions on Wireless Communications, 2}(4), 840--845.



\end{thebibliography}
\end{document}